\documentclass[onecolumn,superscriptaddress,aps]{revtex4-2}
\usepackage{graphicx}
\usepackage{amsmath}
\usepackage{amsthm}
\usepackage{amssymb}
\usepackage{latexsym}
\usepackage{array}
\usepackage{booktabs}
\usepackage{hyperref}
\usepackage{amsfonts}
\usepackage{dsfont}
\usepackage{mathrsfs}
\usepackage{verbatim}
\usepackage{bbold}
\usepackage[normalem]{ulem}
\usepackage{upgreek}
\usepackage{makecell}
\usepackage{adjustbox}
\usepackage{algorithm}
\usepackage{algpseudocode}
\usepackage{color}
\usepackage{bm}
\usepackage{times}

\graphicspath{{figures/}}

\newcommand{\ket}[1]{\left|#1\right>}
\newcommand{\abs}[1]{\bigl|#1\bigr|}
\newcommand{\norm}[1]{\left\lVert#1\right\rVert}

\newcommand{\tr}[1]{\operatorname{Tr}\!\left[#1\right]}
\newcommand{\id}{\hat{\mathds{1}}}
\newcommand{\E}{\mathbb{E}}
\newcommand{\R}{\mathbb{R}}
\newcommand{\eps}{\varepsilon}
\newcommand{\pauli}{\mathsf{P}_1}
\newcommand{\calE}{\mathcal{E}}
\newcommand{\calP}{\mathcal{P}}
\newcommand{\hHtf}{\hat{H}^{\rm tf}}
\newcommand{\hUtf}{\hat{U}^{\rm tf}}
\newcommand{\be}{\bm{\beta}}
\newcommand{\sig}{\hat{\bm{\sigma}}}
\newcommand{\hJ}{\hat{J}}
\newcommand{\hH}{\hat{H}}
\newcommand{\hU}{\hat{U}}
\newcommand{\hP}{\hat{P}}
\newcommand{\hG}{\hat{G}}
\newcommand{\hg}{\hat{g}}
\newcommand{\hX}{\hat{X}}
\newcommand{\hY}{\hat{Y}}
\newcommand{\hZ}{\hat{Z}}
\newcommand{\hrho}{\hat{\rho}}
\newcommand{\hsigma}{\hat{\sigma}}
\newcommand{\hC}{\hat{C}}
\newcommand{\hVop}{\hat{V}}
\newcommand{\hW}{\hat{W}}
\newcommand{\hM}{\hat{M}}
\newcommand{\hDelta}{\hat{\Delta}}

\renewcommand\arraystretch{1.2}

\newtheorem{theorem}{Theorem}
\newtheorem{proposition}{Proposition}
\newtheorem{lemma}{Lemma}
\newtheorem{corollary}{Corollary}

\begin{document}

\title{Passive Pauli Toggling of Polarization Qubits in Optical Fibers: Error Bounds and QKD Performance}

\author{Bongjune~Kim}\email{bongjunekim@jejunu.ac.kr}
\affiliation{Department of Physics, Jeju National University, Jeju 63243, Republic of Korea}

\author{Jeongho~Bang}\email{jbang@yonsei.ac.kr}
\affiliation{Institute for Convergence Research and Education in Advanced Technology, Yonsei University, Seoul 03722, Republic of Korea}
\affiliation{Department of Quantum Information, Yonsei University, Incheon 21983, Republic of Korea}

\date{\today}

\begin{abstract}
Polarization qubits offer a direct route to fiber-based quantum communication, yet the fiber that carries them also scrambles their reference frame through uncontrolled birefringence. Active compensation commonly relies on monitoring and feedback, raising a natural question: can the link itself suppress coherent polarization drift before it reaches the receiver? We show that it can within a regime of sufficiently correlated unitary drift. Our central idea is to embed a fixed cyclic sequence of Pauli rotations along the fiber, turning propagation distance into a spatial toggling frame. Rather than estimating and inverting the unknown transformation, the sequence repeatedly reverses its leading action. We establish exact refocusing for constant generators compatible with a two-segment echo, show that a four-frame Pauli cell cancels the leading contribution of any traceless quasi-static generator, and bound the residual error for smoothly varying birefringence. We then connect these guarantees to operational BB84 quantities, including measured QBERs, the resulting secret fraction, and insertion loss. In simulations of a 50 km fiber with spatially correlated birefringence, a representative design reduces the mean QBER from $6.11\%$ to $0.224\%$ at a device spacing of $1.25$ km. With an assumed per-device transmission of $t=0.997$, this raises the asymptotic key rate per launched pulse by a factor of approximately $2.5$. For this parameter set, the best design in the tested scan is not the densest one: error suppression and optical loss create a finite operating window. These results provide a loss-aware design principle for passive polarization stabilization and suggest a low-overhead complement to active tracking in polarization-encoded QKD networks.
\end{abstract}

\maketitle

%-------------------------------------------------------------------------------------------------------------------------------------------------------------------------------------------------------------------------------------
\section{Introduction}
%-------------------------------------------------------------------------------------------------------------------------------------------------------------------------------------------------------------------------------------

Single-photon polarization qubits are among the most direct carriers for fiber-based quantum key distribution (QKD)~\cite{Bennett1984BB84,Gisin2002QuantumCryptography,Scarani2009PracticalQKD}. Their weakness is also elementary: even when attenuation is polarization independent, a birefringent fiber does more than reduce the photon flux; it applies an unknown unitary rotation to the polarization state. In a stable laboratory setup this rotation can be calibrated and compensated. In a deployed link, however, the transformation varies spatially because of manufacturing inhomogeneity and drifts over time with temperature, stress, and bending. Active polarization tracking is therefore a standard engineering tool, but it adds monitoring overhead and can be difficult to combine with fast, low-footprint, or passive nodes. The question addressed here is whether part of the coherent polarization drift can be suppressed by the link itself, before any measurement-based feedback is invoked.

In this work, we develop a mechanism that we call spatial Pauli toggling. Its basic logic is to make an unknown but slowly varying coherent rotation cancel against differently signed versions of itself. The mechanism may be viewed as a passive form of spatial ``bothering'' of the qubit. We insert a predetermined sequence of polarization rotations at fixed locations along the fiber, with each rotation chosen from the one-qubit Pauli set
\begin{eqnarray}
\pauli = \{\id,\hX,\hY,\hZ\},
\end{eqnarray}
up to physically irrelevant global phases. The photon encounters the same passive sequence on every transmission: no device estimates the instantaneous fiber transformation, and no controller updates the sequence in real time. The propagation coordinate $z$ then plays the role usually played by time, and the in-line rotations generate a spatial toggling frame in which slowly varying coherent errors acquire alternating signs.

With this mechanism in place, we ask when the resulting spatial cancellation is both mathematically controlled and operationally useful. We formulate fiber propagation as a path-ordered $SU(2)$ process and derive an exact toggling-frame factorization for embedded polarization rotations. This analysis yields two complementary cancellation statements: a two-segment echo exactly refocuses a constant generator that anticommutes with the inserted Pauli, whereas a cyclic four-frame cell visiting ${\id,\hX,\hY,\hZ}$ cancels the first moment of an arbitrary traceless quasi-static generator. For bounded segment generators, we further obtain an end-to-end norm bound that separates noncommuting higher-order contributions from genuine intra-cell variation; denoting the fixed total link length by $L$ and the spacing between adjacent control devices by $\ell$, its smooth-fiber corollary gives $O(L\ell)$ worst-case scaling as $\ell\rightarrow0$, with the generator-strength and Lipschitz bounds held fixed. We then carry these control guarantees to the protocol level. Pauli twirling preserves the two quantum bit error rates (QBERs), $Q_Z$ and $Q_X$, enabling bounds on the total Pauli error and the asymptotic secret fraction, while an explicit break-even condition determines whether the resulting improvement survives accumulated device loss. Direct simulations of a 50 km fiber with spatially correlated Ornstein--Uhlenbeck birefringence support this theoretical picture. At $\ell=1.25$ km, the mean QBER falls from $6.11\%$ to $0.224\%$, and the total Pauli error weight falls from approximately $9.1\times10^{-2}$ to $4.0\times10^{-3}$. For an assumed per-device transmission of $t=0.997$, the asymptotic key rate per launched pulse rises from $1.68\times10^{-2}$ to $4.23\times10^{-2}$ in this representative scan, an approximately $2.5$-fold increase. Further densification continues to improve polarization fidelity, but eventually lowers the net rate as insertion loss accumulates.

Several earlier studies pursued a broadly related direction: protecting flying polarization qubits through predetermined, spatially distributed control operations~\footnote{Early proposals translated temporal bang--bang control into spatially distributed polarization operations in optical fibers, while subsequent studies developed Pauli-frame, CPMG, XY-4, and Knill-type protocols, examined implementation errors and non-Markovian effects, and included an experimental demonstration in an optical cavity (see, for example, Refs.~\cite{WuLidar2004FiberNoise,MassarPopescu2007PMD,Damodarakurup2009BangBang,Lucamarini2011FlyingQubits,RoyBardhan2012FiberDD,RoyBardhan2013TailoredWavePlates,Gupta2015Knill,Barge2025PolarizationEntanglement}).}. What has remained missing, however, is a unified framework connecting the finite-spacing accuracy of such control directly to end-to-end quantum-communication performance after device loss is taken into account. We address this gap by elevating the established control principles into a quantitative link-level certification and design framework that jointly provides three elements: an explicit finite-spacing, end-to-end residual bound for generally noncommuting $SU(2)$ polarization drift under stated spatial-regularity assumptions; a channel-level conversion of that residual into bounds on basis-resolved BB84 QBERs and the asymptotic secret fraction; and a loss-aware break-even criterion that identifies the finite range of device spacings yielding a net communication advantage. By propagating a physical control bound through operational security metrics and device loss, our framework shifts the central question from whether spatial decoupling can suppress polarization drift to when, and by how much, it can improve an end-to-end quantum link. Therefore, the advance is a mathematically controlled, protocol-aware, and experimentally testable bridge between spatial control theory and loss-constrained BB84 system design.

Viewed in this way, the work provides a link-design framework for passive polarization stabilization at three complementary levels. At the control level, it identifies the spatial regularity required for effective open-loop symmetrization and quantifies the residual error accumulated over an entire fiber. At the protocol level, it expresses that residual in QBERs and secret fractions directly relevant to BB84, rather than only in the fidelity of a selected input state. At the engineering level, it replaces the idealized objective of ever denser decoupling with a finite operating window jointly determined by birefringence correlation, cell spacing, and insertion loss. Moreover, because the resulting decision rule can be evaluated from measured $Q_Z$, $Q_X$, and transmission, the framework can be applied to characterized links. Within the regime of sufficiently correlated coherent drift, the proposed scheme thus offers a testable route to passive link preconditioning and a principled foundation for future hybrid passive--active architectures.

%-------------------------------------------------------------------------------------------------------------------------------------------------------------------------------------------------------------------------------------
\section{Spatial model and embedded Pauli frames}
%-------------------------------------------------------------------------------------------------------------------------------------------------------------------------------------------------------------------------------------

We encode a qubit in the polarization basis $\{\ket{H},\ket{V}\}$. A lossless linear optical element that preserves the two polarization modes acts unitarily on their amplitudes~\cite{Agrawal2010Fiber,NielsenChuang2010}. Since a global phase has no operational effect on density operators, the relevant transformation may be represented by an element of $SU(2)$.

\begin{proposition}[Reduction of lossless polarization optics to $SU(2)$]
\label{prop:u2_to_su2}
A lossless linear transformation that mixes two orthogonal polarization modes induces a unitary Jones matrix $\hJ\in U(2)$ on the single-photon polarization subspace. Its action on density operators depends only on the equivalence class of $\hJ$ modulo a global phase, so it can be represented by an element of $SU(2)$.
\end{proposition}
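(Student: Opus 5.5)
The plan has three steps: show that losslessness forces the Jones matrix to be unitary, show that global phases act trivially on density operators, and then choose a determinant-one representative.

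\textbf{Unitarity from losslessness.} A linear optical element acting on the two polarization modes maps the mode operators linearly, $\hat a_j^\dagger \mapsto \sum_k J_{kj}\hat a_k^\dagger$ with $j,k\in\{H,V\}$. On the single-photon subspace spanned by $\ket{H},\ket{V}$ this is exactly the action of the $2\times 2$ Jones matrix $\hJ$. Losslessness means the total photon number, i.e.\ the norm of every single-photon state, is preserved. Equivalently, the transformed mode operators must still satisfy the canonical commutation relations. Either formulation gives $\hJ^\dagger\hJ=\id$, so $\hJ\in U(2)$.

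\textbf{Global phases act trivially.} A density operator $\hrho$ on the polarization subspace evolves as $\hrho\mapsto\hJ\hrho\hJ^\dagger$. If $\hJ' = e^{i\phi}\hJ$, then
\begin{equation}
\hJ'\hrho\hJ'^\dagger = e^{i\phi}e^{-i\phi}\,\hJ\hrho\hJ^\dagger = \hJ\hrho\hJ^\dagger .
\end{equation}
Hence the induced channel depends only on the class $[\hJ]\in U(2)/U(1)$. The converse, that the channel determines $\hJ$ up to phase, is not required by the statement and is not needed here.

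\textbf{Choosing an $SU(2)$ representative.} For any $\hJ\in U(2)$ we have $|\det\hJ|=1$, so $\det\hJ=e^{i\theta}$ for some $\theta$. Set
\begin{equation}
\hU=e^{-i\theta/2}\hJ .
\end{equation}
Then $\det\hU=e^{-i\theta}\det\hJ=1$ and $\hU$ is unitary, so $\hU\in SU(2)$. By the previous step $\hU$ induces the same channel as $\hJ$. The representative is unique only up to sign, since $\pm\hU$ both lie in $SU(2)$. This ambiguity is harmless: it is itself a global phase and so does not affect $\hU\hrho\hU^\dagger$. It also means the induced channel group is $SU(2)/\{\pm\id\}\cong SO(3)$, the Poincaré-sphere rotations.

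\textbf{Main obstacle.} The only non-routine point is justifying the first step, that losslessness forces unitarity. I will handle it by restricting to the single-photon sector and invoking norm (probability) preservation, noting that a bounded linear isometry on a finite-dimensional space is unitary. The multimode commutation-relation argument can serve as an alternative.
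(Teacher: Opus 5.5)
Your proposal is correct and follows essentially the same route as the paper: unitarity of $\hat{J}$ from preservation of the canonical commutation relations (or, equivalently, norm preservation on the single-photon subspace), invariance of $\hat{\rho}\mapsto\hat{J}\hat{\rho}\hat{J}^\dagger$ under global phases, and factoring out a phase to obtain an $SU(2)$ representative. Your explicit choice $e^{-i\theta/2}\hat{J}$ with $\det\hat{J}=e^{i\theta}$ spells out the decomposition $\hat{J}=e^{i\phi}\hat{U}$, which the paper states without giving a construction.
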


\begin{proof}---Let $\hat{a}_H,\hat{a}_V$ be annihilation operators for the two polarization modes and write $\hat{a}'_i=\sum_j\hJ_{ij}\hat{a}_j$. Preservation of the canonical commutation relations, $[\hat{a}'_i,\hat{a}_j^{\prime\dagger}]=\delta_{ij}$, requires $\hJ\hJ^\dagger=\id$, hence $\hJ\in U(2)$. On the one-photon subspace spanned by $\ket{H}=\hat{a}_H^\dagger\ket{\rm vac}$ and $\ket{V}=\hat{a}_V^\dagger\ket{\rm vac}$, this same matrix acts on the two amplitudes. Replacing $\hJ$ by $e^{i\phi}\hJ$ multiplies a pure state by a global phase and leaves $\hrho\mapsto \hJ\hrho\hJ^\dagger$ unchanged. Since every $U(2)$ matrix can be written as $e^{i\phi}\hU$ with $\hU\in SU(2)$, the physical conjugation action is represented by $SU(2)$.
\end{proof}

For a fixed transmission event, propagation through a fiber of length $L$ is modeled by the spatial Schr\"odinger equation
\begin{eqnarray}
\frac{d}{dz}\ket{\psi(z)}=-i\hH(z)\ket{\psi(z)},
\quad
\hH(z)=\frac{1}{2}\be(z)\cdot\sig,
\label{eq:spatial_schrodinger}
\end{eqnarray}
where $\be(z)\in\R^3$ is the local birefringence vector and $\sig=(\hX,\hY,\hZ)$. The corresponding propagator is
\begin{eqnarray}
\hU(z_2,z_1)=\mathcal{P}\exp\left[-i\int_{z_1}^{z_2}\hH(z)dz\right],
\label{eq:path_ordered}
\end{eqnarray}
with $\hrho(z_2)=\hU(z_2,z_1)\hrho(z_1)\hU(z_2,z_1)^\dagger$. In Bloch-vector form, $\hrho=(\id+\bm r\cdot\sig)/2$, Eq.~\eqref{eq:spatial_schrodinger} is equivalent to
\begin{eqnarray}
\frac{d\bm r(z)}{dz}=\be(z)\times \bm r(z),
\label{eq:bloch_transport}
\end{eqnarray}
so the fiber transports the Stokes vector by a spatially accumulated rotation on the Poincare sphere.

The distinction between a fixed realization and an operational channel will be important. Let $\omega$ denote the uncontrolled classical state of the fiber during a given transmission, including the spatial birefringence profile. For fixed $\omega$, the output is unitary, $\hrho\mapsto \hU_\omega(L,0)\hrho \hU_\omega(L,0)^\dagger$. If the receiver does not condition on $\omega$, the observed map is the random-unitary channel
\begin{eqnarray}
\calE(\hrho)=\E_\omega\left[\hU_\omega(L,0)\hrho \hU_\omega(L,0)^\dagger\right].
\label{eq:uncontrolled_channel}
\end{eqnarray}
This map is completely positive, trace preserving, and unital. The controlled link considered below has the same form with $\hU_\omega$ replaced by the controlled propagator $\hU_{{\rm ctrl},\omega}$.

Divide the fiber into $N$ equal segments of length $\ell=L/N$, with boundaries $z_j=j\ell$. Let
\begin{eqnarray}
\hU_j=\hU(z_j,z_{j-1})
\end{eqnarray}
be the uncontrolled propagator on segment $j$. At the boundaries we insert ideal, zero-length polarization rotators $\hP_j\in\pauli$. The physical controlled propagator is
\begin{eqnarray}
\hU_{\rm ctrl}=\hP_N\hU_N\hP_{N-1}\hU_{N-1}\cdots \hP_1\hU_1\hP_0.
\label{eq:physical_controlled}
\end{eqnarray}
It is more transparent to describe the same sequence by cumulative Pauli frames
\begin{eqnarray}
\hg_j=\hP_j\hP_{j-1}\cdots \hP_0,
\label{eq:frames}
\end{eqnarray}
Here the products retain their actual matrix phases. Their conjugation action depends only on their Pauli equivalence classes. A sequence is called cyclic when $\hg_N=e^{i\chi}\id$ for a known global phase $\chi$, so that the inserted rotators impose no nontrivial Pauli frame at the output. To distinguish a physical propagator from its phase-fixed residual, define
\begin{eqnarray}
\hVop_{\rm ctrl}=\hg_N^\dagger\hU_{\rm ctrl}.
\label{eq:residual_propagator}
\end{eqnarray}
For a cyclic sequence these two propagators induce the same polarization channel. All identity-distance bounds below refer to $\hVop_{\rm ctrl}$, so they are not affected by the overall phase of the physical pulse product.

\begin{proposition}[Toggling-frame factorization]
\label{prop:toggling}
For any embedded Pauli sequence and any segment propagators $\hU_1,\ldots,\hU_N$,
\begin{eqnarray}
\hU_{\rm ctrl}=\hg_N\,\hUtf_N\hUtf_{N-1}\cdots\hUtf_1,
\quad
\hUtf_j=\hg_{j-1}^\dagger \hU_j \hg_{j-1}.
\label{eq:toggling_factor}
\end{eqnarray}
The residual $\hVop_{\rm ctrl}$ is exactly the product of toggling-frame segment propagators. For a cyclic sequence it represents the physical controlled channel after removal of the known global phase.
\end{proposition}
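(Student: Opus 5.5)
The plan is to verify the factorization \eqref{eq:toggling_factor} by a telescoping insertion of identities. The only algebraic input is that each cumulative frame $\hg_j$ in \eqref{eq:frames} is unitary (a product of unitaries, with phases retained). From the definition we have $\hP_j=\hg_j\hg_{j-1}^\dagger$ for $j\ge 1$, and $\hP_0=\hg_0$. Substituting these into the physical product \eqref{eq:physical_controlled} gives
\begin{eqnarray}
\hU_{\rm ctrl}=(\hg_N\hg_{N-1}^\dagger)\hU_N(\hg_{N-1}\hg_{N-2}^\dagger)\hU_{N-1}\cdots(\hg_1\hg_0^\dagger)\hU_1\hg_0 .
\nonumber
\end{eqnarray}
I will then regroup the factors so that each $\hU_j$ is flanked on the left by $\hg_{j-1}^\dagger$ and on the right by $\hg_{j-1}$. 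This yields $\hg_N(\hg_{N-1}^\dagger\hU_N\hg_{N-1})(\hg_{N-2}^\dagger\hU_{N-1}\hg_{N-2})\cdots(\hg_0^\dagger\hU_1\hg_0)$, which is exactly $\hg_N\hUtf_N\cdots\hUtf_1$.

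To make the argument clean, I would phrase it as an induction on $N$. Define the partial product $\hU_{\rm ctrl}^{(k)}=\hP_k\hU_k\cdots\hP_1\hU_1\hP_0$ and claim $\hU_{\rm ctrl}^{(k)}=\hg_k\hUtf_k\cdots\hUtf_1$. The base case $k=0$ is $\hP_0=\hg_0$. For the step, $\hU_{\rm ctrl}^{(k+1)}=\hP_{k+1}\hU_{k+1}\hg_k\hUtf_k\cdots\hUtf_1$. Inserting $\hg_k\hg_k^\dagger=\id$ to the left of $\hU_{k+1}$ gives $\hP_{k+1}\hg_k(\hg_k^\dagger\hU_{k+1}\hg_k)\cdots=\hg_{k+1}\hUtf_{k+1}\cdots$, which completes the step. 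Nothing about the segment propagators is used beyond their being arbitrary $2\times2$ matrices, consistent with the "any $\hU_j$" wording. Likewise, nothing about the $\hP_j$ is used beyond unitarity, so the result holds with actual matrix phases and not merely up to phase.

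The remaining claims follow directly.
\begin{itemize}
\item Multiplying by $\hg_N^\dagger$ and using the definition \eqref{eq:residual_propagator} gives $\hVop_{\rm ctrl}=\hUtf_N\cdots\hUtf_1$.
\item In the cyclic case, $\hg_N=e^{i\chi}\id$, so $\hU_{\rm ctrl}=e^{i\chi}\hVop_{\rm ctrl}$. The conjugation actions then coincide, because $e^{i\chi}\hVop\hrho\hVop^\dagger e^{-i\chi}=\hVop\hrho\hVop^\dagger$, as in Proposition~\ref{prop:u2_to_su2}.
\item This equality holds realization by realization in $\omega$, so it also survives averaging over $\omega$ in the channel \eqref{eq:uncontrolled_channel} with $\hU_\omega$ replaced by the controlled propagator.
\end{itemize}

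There is no real obstacle here. The only point requiring care is bookkeeping: keeping the index shift straight, so that segment $j$ is conjugated by $\hg_{j-1}$ (the frame accumulated before that segment) and not by $\hg_j$. Along the same lines, it must be noted explicitly that phases are retained, so that $\hg_N$ is the exact known unitary rather than an equivalence class.
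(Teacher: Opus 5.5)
Your proof is correct and is essentially the paper's argument: the paper likewise inserts $\hg_{j-1}\hg_{j-1}^\dagger=\id$ next to each $\hU_j$ and uses $\hg_j=\hP_j\hg_{j-1}$ repeatedly, and your induction simply formalizes that step. You also spell out the residual identity $\hVop_{\rm ctrl}=\hUtf_N\cdots\hUtf_1$ and the cyclic global-phase argument, which the paper's proof leaves implicit.
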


\begin{proof}---Insert $\hg_{j-1}\hg_{j-1}^\dagger=\id$ next to each $\hU_j$ in Eq.~\eqref{eq:physical_controlled} and use $\hg_j=\hP_j\hg_{j-1}$ repeatedly. This gives
\begin{eqnarray}
\hP_N\hU_N\hP_{N-1}\cdots \hP_1\hU_1\hP_0 = \hg_N \hg_{N-1}^\dagger \hU_N \hg_{N-1}\cdots \hg_0^\dagger \hU_1\hg_0,
\end{eqnarray}
which is Eq.~\eqref{eq:toggling_factor}.
\end{proof}

On segment $j$, the exact toggling-frame generator and propagator are
\begin{eqnarray}
\hHtf(z)&=&\hg_{j-1}^\dagger\hH(z)\hg_{j-1},\quad z_{j-1}<z<z_j,\nonumber\\
\hUtf_j&=&\mathcal{P}\exp\left[-i\int_{z_{j-1}}^{z_j}\hHtf(z)dz\right].
\label{eq:toggling_generator}
\end{eqnarray}
The segment average $\hH_j=\ell^{-1}\int_{z_{j-1}}^{z_j}\hH(z)dz$ determines the integrated first moment, but generally $\hU_j\ne\exp(-i\ell\hH_j)$ because generators at distinct positions need not commute. No replacement of the path-ordered propagator by this average-generator exponential is made in the bounds below.

Conjugation by a Pauli flips two components of $\be(z)$. For example,
\begin{eqnarray}
\hX(\beta_x\hX+\beta_y\hY+\beta_z\hZ)\hX=\beta_x\hX-\beta_y\hY-\beta_z\hZ,
\end{eqnarray}
and similarly for $\hY$ and $\hZ$. A Pauli frame sequence therefore modulates the sign pattern of the local birefringence vector. This sign modulation is the mechanism behind the cancellation results below. The four-frame construction and its first-order cancellation have direct precedents in spatial polarization control~\cite{MassarPopescu2007PMD}. The analysis below supplies explicit finite-spacing bounds for the continuous generator, including nonuniform cells, and carries these bounds to a loss- and calibration-error-dependent BB84 certificate.

%-------------------------------------------------------------------------------------------------------------------------------------------------------------------------------------------------------------------------------------
\section{Cancellation theory}
%-------------------------------------------------------------------------------------------------------------------------------------------------------------------------------------------------------------------------------------

The useful regime is one in which the fiber generator is not known but is sufficiently smooth over the spatial scale of an embedded cell. We first record the exact two-segment echo identity, then state the universal Pauli-cell bound that will be used later to control QKD errors.

\begin{theorem}[Exact spatial echo]
\label{thm:echo}
Let $\hU=\exp(-i\hH\ell)$ for a constant Hermitian generator $\hH$. If a Hermitian unitary $\hP$ satisfies $\hP^2=\id$ and $\hP\hH\hP=-\hH$, then
\begin{eqnarray}
\hU\hP\hU\hP=\id.
\label{eq:echo_identity}
\end{eqnarray}
\end{theorem}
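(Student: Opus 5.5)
The plan is to show that conjugation by $\hP$ turns $\hU$ into its inverse, and then collapse the product. The key step is the identity
\begin{eqnarray}
\hP\hU\hP=\exp\left(-i\ell\,\hP\hH\hP\right)=\exp(i\ell\hH)=\hU^\dagger .
\end{eqnarray}

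First, I will justify that conjugation passes through the exponential. Since $\hP$ is Hermitian with $\hP^2=\id$, it is unitary and $\hP^{-1}=\hP$. So $\hP\hH^n\hP=(\hP\hH\hP)^n$ for every $n\ge 0$, because the interior factors $\hP\hP=\id$ cancel. Summing the exponential power series, which converges absolutely in finite dimension, gives $\hP e^{-i\ell\hH}\hP=e^{-i\ell\hP\hH\hP}$.

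Next, I substitute the anticommutation hypothesis $\hP\hH\hP=-\hH$. This yields $\hP\hU\hP=e^{i\ell\hH}$. Because $\hH$ is Hermitian, $e^{i\ell\hH}=(e^{-i\ell\hH})^\dagger=\hU^\dagger=\hU^{-1}$.

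Finally, I write $\hU\hP\hU\hP=\hU(\hP\hU\hP)=\hU\hU^{\dagger}=\id$.

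There is no real obstacle here. The only point needing care is the conjugation-through-exponential step, and it uses nothing beyond $\hP^{-1}=\hP$ and the power series.

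I would add a remark connecting the result to the paper's framework. In the language of Proposition~\ref{prop:toggling}, take the sequence $\hP_0=\id$, $\hP_1=\hP$, $\hP_2=\hP$. Then $\hg_0=\id$, $\hg_1=\hP$, and $\hg_2=\id$, so the sequence is cyclic. The toggling-frame segment propagators are $\hU$ and $\hP\hU\hP=\hU^\dagger$, and the theorem states that their product is exactly the identity.
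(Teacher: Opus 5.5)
Your proof is correct and is the paper's argument: both establish $\hP\hU\hP=\exp(-i\ell\,\hP\hH\hP)=\hU^\dagger$ and then collapse $\hU(\hP\hU\hP)=\hU\hU^\dagger=\id$. Your power-series justification of the conjugation step makes explicit what the paper uses implicitly. One small point on your closing remark: the toggling-frame product there is $\hP\hU\hP\hU=\hU^\dagger\hU$ rather than $\hU\hP\hU\hP$, but both equal $\id$ by the same identity.
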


\begin{proof}---The anticommutation condition gives
\begin{eqnarray}
\hP\hU\hP=\hP\exp(-i\hH\ell)\hP=\exp(-i\hP\hH\hP\ell)=\exp(i\hH\ell)=\hU^\dagger.
\end{eqnarray}
Therefore, $\hU\hP\hU\hP=\hU(\hP\hU\hP)=\hU\hU^\dagger=\id$.
\end{proof}

The echo is exact but requires a Pauli that flips the particular generator. A universal construction is obtained by using four toggling frames that visit $\pauli$ once per cell.

\begin{lemma}[Pauli symmetrization of traceless generators]
\label{lem:pauli_symm}
For every traceless Hermitian $2\times2$ operator $\hH$,
\begin{eqnarray}
\frac{1}{4}\sum_{\hG\in\pauli} \hG\hH\hG=0.
\label{eq:pauli_average}
\end{eqnarray}
\end{lemma}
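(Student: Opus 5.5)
The plan is to use linearity and reduce to the Pauli basis. Every traceless Hermitian $2\times2$ operator can be written as $\hH=\frac{1}{2}\be\cdot\sig=\frac{1}{2}(\beta_x\hX+\beta_y\hY+\beta_z\hZ)$ with $\be\in\R^3$, since $\{\id,\hX,\hY,\hZ\}$ is a basis of Hermitian $2\times2$ matrices and tracelessness removes the $\id$ component. The map $\hH\mapsto\frac{1}{4}\sum_{\hG\in\pauli}\hG\hH\hG$ is linear, so it suffices to show that it annihilates each of $\hX$, $\hY$, $\hZ$ separately.

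For a fixed $\hat{\sigma}_a\in\{\hX,\hY,\hZ\}$, I will evaluate the four conjugates $\hG\hat{\sigma}_a\hG$ using the Pauli relations $\hG^2=\id$ and the fact that distinct non-identity Paulis anticommute. The identity and $\hG=\hat{\sigma}_a$ each give $+\hat{\sigma}_a$. The two remaining Paulis anticommute with $\hat{\sigma}_a$ and give $-\hat{\sigma}_a$. This is exactly the sign-flip pattern already noted after Proposition~\ref{prop:toggling}, e.g.\ $\hX\hY\hX=-\hY$. The sum is therefore $(1+1-1-1)\hat{\sigma}_a=0$, and linearity yields Eq.~\eqref{eq:pauli_average}.

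There is no substantial obstacle. The only point needing care is that $\pauli$ here means the genuine Hermitian Pauli matrices, so that conjugation is $\hG\hH\hG=\hG\hH\hG^\dagger$ with $\hG^2=\id$. If physical rotators carry global phases $e^{i\phi}\hG$, the conjugation action $\hG\hH\hG^\dagger$ is unchanged, so the statement is unaffected.

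As a cross-check, one can note that $\frac{1}{4}\sum_{\hG}\hG\hat{A}\hG^\dagger=\frac{1}{2}\tr{\hat{A}}\,\id$ for any $2\times2$ matrix $\hat{A}$; this is the completely depolarizing (Pauli-twirl) identity. Applying it to a traceless $\hH$ gives zero immediately.
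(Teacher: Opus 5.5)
Your proof is correct and is essentially the paper's argument. The paper expands $\hH=\frac12\be\cdot\sig$ and adds the four sign patterns $(+,+,+)$, $(+,-,-)$, $(-,+,-)$, $(-,-,+)$ produced by conjugation with $\id,\hX,\hY,\hZ$; you do the same computation component by component for each fixed $\hat{\sigma}_a$, and your twirl-identity cross-check $\frac14\sum_{\hG}\hG\hat{A}\hG^\dagger=\frac12\tr{\hat{A}}\,\id$ is also valid.
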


\begin{proof}---Write $\hH=(\beta_x\hX+\beta_y\hY+\beta_z\hZ)/2$. Conjugation by $\hX$ leaves the $\hX$ component fixed and flips the $\hY$ and $\hZ$ components; conjugation by $\hY$ and $\hZ$ gives the cyclic variants. Summing the four sign patterns gives
\begin{eqnarray}
(\beta_x,\beta_y,\beta_z) + (\beta_x,-\beta_y,-\beta_z) + (-\beta_x,\beta_y,-\beta_z) + (-\beta_x,-\beta_y,\beta_z)=0,
\end{eqnarray}
so the operator sum vanishes.
\end{proof}

Hence, if $\hH$ is constant over a four-segment cell, the first moment of the toggling-frame generator vanishes. What remains is not generally zero, because the four toggled generators need not commute. The next theorem bounds this noncommuting residual and the additional error caused by spatial variation within the cell.

\begin{lemma}[Unitary remainder bound with controlled first moment]
\label{lem:dyson_main}
Let $\hHtf(z)$ be a measurable Hermitian generator on $[0,T]$ with $\Lambda=\int_0^T\norm{\hHtf(z)}dz<\infty$, and let
\begin{eqnarray}
\hU(T,0)=\mathcal{P}\exp\left[-i\int_0^T\hHtf(z)dz\right],
\quad
\hM_1=\int_0^T\hHtf(z)dz.
\end{eqnarray}
Then,
\begin{eqnarray}
\norm{\hU(T,0)-\id} \le \min\left\{2,\norm{\hM_1}+\frac{\Lambda^2}{2}\right\} \le \norm{\hM_1}+e^\Lambda-1-\Lambda.
\label{eq:dyson_main}
\end{eqnarray}
In particular, if $\hM_1=0$, the deviation is at most $\Lambda^2/2$. No small-$\Lambda$ assumption is required for this inequality.
\end{lemma}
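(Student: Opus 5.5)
The plan is to work with the integral (Volterra) form of the path-ordered propagator and peel off the first-order term explicitly. Write $\hU(z)=\hU(z,0)$. It solves $\frac{d}{dz}\hU(z)=-i\hHtf(z)\hU(z)$ with $\hU(0)=\id$, in the absolutely continuous sense, since $\hHtf$ is only measurable with $\Lambda<\infty$. Integrating once gives $\hU(T)=\id-i\int_0^T\hHtf(z)\hU(z)\,dz$. Substituting the same identity for $\hU(z)$ inside the integral gives
\begin{eqnarray}
\hU(T)-\id=-i\hM_1-\int_0^T\!\!dz_1\int_0^{z_1}\!\!dz_2\,\hHtf(z_1)\hHtf(z_2)\hU(z_2).
\end{eqnarray}
This is exact: there is no truncated series and no convergence question. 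That is why no small-$\Lambda$ assumption is needed.

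Next I will bound the remainder. Unitarity of the exact propagator, $\norm{\hU(z_2)}=1$, gives
\begin{eqnarray}
\norm{\hU(T)-\id}\le\norm{\hM_1}+\int_0^T\!\!\int_0^{z_1}\norm{\hHtf(z_1)}\norm{\hHtf(z_2)}\,dz_2\,dz_1 .
\end{eqnarray}
The double integral over the ordered simplex is symmetric in $z_1,z_2$, so it equals $\frac12\big(\int_0^T\norm{\hHtf}\big)^2=\Lambda^2/2$. Separately, $\norm{\hU(T)-\id}\le\norm{\hU(T)}+\norm{\id}=2$. Taking the minimum of the two bounds gives the first inequality. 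Setting $\hM_1=0$ gives the stated special case.

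The second inequality is elementary:
\begin{eqnarray}
\min\{2,\norm{\hM_1}+\Lambda^2/2\}\le\norm{\hM_1}+\Lambda^2/2\le\norm{\hM_1}+e^\Lambda-1-\Lambda,
\end{eqnarray}
because $e^\Lambda-1-\Lambda=\sum_{k\ge2}\Lambda^k/k!\ge\Lambda^2/2$ for $\Lambda\ge0$.

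The only delicate step is justifying the integral equation and the existence and unitarity of $\hU(z)$ for a merely measurable, integrable generator. I would handle this by taking $\hU$ as the unique absolutely continuous solution of the Volterra equation. Existence and uniqueness follow from Picard iteration, whose Dyson series converges absolutely with $k$-th term bounded by $\Lambda^k/k!$. Unitarity follows because $\frac{d}{dz}(\hU^\dagger\hU)=0$ almost everywhere, by Hermiticity of $\hHtf$. With this in place, the rest of the argument is the norm bookkeeping above.
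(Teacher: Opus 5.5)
Your proposal is correct and follows essentially the paper's argument. You iterate the integral equation once to isolate $-i\hM_1$, use $\norm{\hU(z_2)}=1$ to bound the ordered double integral by $\Lambda^2/2$, then combine this with the trivial bound $2$ and with $\Lambda^2/2\le e^\Lambda-1-\Lambda$; this is exactly the appendix version. The main-text proof is equivalent: it writes the remainder as $-i\int_0^T\hHtf(s)\bigl(\hU(s,0)-\id\bigr)ds$ and uses $\norm{\hU(s,0)-\id}\le\int_0^s\norm{\hHtf(r)}dr$. Your Picard/unitarity justification for a merely integrable generator fills in a detail the paper leaves implicit.
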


\begin{proof}---The integral evolution equation and unitarity give
\begin{eqnarray}
\hU(T,0)-\id+i\hM_1 &=& -i\int_0^T\hHtf(s)\bigl(\hU(s,0)-\id\bigr)ds, \nonumber\\
\norm{\hU(s,0)-\id} &\le& \int_0^s\norm{\hHtf(r)}dr.
\end{eqnarray}
Consequently,
\begin{eqnarray}
\norm{\hU(T,0)-\id+i\hM_1} \le \int_0^T\norm{\hHtf(s)}\int_0^s\norm{\hHtf(r)}dr\,ds = \frac{\Lambda^2}{2}.
\end{eqnarray}
The triangle inequality, $\norm{\hU-\id}\le2$, and $\Lambda^2/2\le e^\Lambda-1-\Lambda$ establish Eq.~\eqref{eq:dyson_main}.
\end{proof}

\begin{theorem}[End-to-end bound for continuous Pauli-symmetrization cells]
\label{thm:end_to_end}
Consider a cyclic sequence of $M$ four-segment cells. The four segments of cell $m$ have a common length $\ell_m>0$, and $L=4\sum_m\ell_m$. Let $I_{m,j}$ denote its $j$-th segment and assume that the four cumulative frames, modulo global phase, visit $\{\id,\hX,\hY,\hZ\}$ once in each cell. Let $\hH(z)$ be a measurable traceless Hermitian generator and define its exact segment averages by
\begin{eqnarray}
\hH_{m,j}=\frac{1}{\ell_m}\int_{I_{m,j}}\hH(z)dz.
\end{eqnarray}
Suppose that for each cell there exists a traceless Hermitian reference $\bar{\hH}_m$ such that
\begin{eqnarray}
\mathop{\rm ess\,sup}_{z\in {\rm cell}\,m}\norm{\hH(z)}\le h_m,
\quad
\norm{\hH_{m,j}-\bar{\hH}_m}\le\delta_m
\label{eq:cell_assumptions}
\end{eqnarray}
for $j=1,\ldots,4$. Then,
\begin{eqnarray}
\norm{\hVop_{\rm ctrl}-\id} \le \min\left\{2,\sum_{m=1}^M\left(4\delta_m\ell_m+8h_m^2\ell_m^2\right)\right\}.
\label{eq:main_bound_nonuniform}
\end{eqnarray}
For common spacing $\ell_m=\ell$, $h_m\le h$, and $\delta_m\le\delta$, this gives
\begin{eqnarray}
\norm{\hVop_{\rm ctrl}-\id}
&\le&\min\left\{2,L\delta+2Lh^2\ell\right\}\nonumber\\
&\le&\frac{L}{4\ell}\left(4\delta\ell+e^{4h\ell}-1-4h\ell\right).
\label{eq:main_bound_uniform}
\end{eqnarray}
\end{theorem}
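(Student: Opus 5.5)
By Proposition~\ref{prop:toggling}, the residual is exactly $\hVop_{\rm ctrl}=\hUtf_N\cdots\hUtf_1$, with no leftover phase, because $\hVop_{\rm ctrl}=\hg_N^\dagger\hU_{\rm ctrl}$ by definition. The plan is to group these factors cell by cell, bound each cell with Lemma~\ref{lem:dyson_main}, and then add the cell errors.

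\textbf{Cell decomposition.} Write $\hVop_{\rm ctrl}=\hW_M\cdots\hW_1$, where $\hW_m$ is the ordered product of the four toggling-frame propagators belonging to cell $m$. Each $\hW_m$ is the path-ordered exponential, over the cell interval of length $4\ell_m$, of the piecewise generator $\hHtf(z)=\hg_{j-1}^\dagger\hH(z)\hg_{j-1}$ from Eq.~\eqref{eq:toggling_generator}. All factors are unitary, so the telescoping identity
\[
\prod_m\hW_m-\id=\sum_m\hW_M\cdots\hW_{m+1}(\hW_m-\id)
\]
gives $\norm{\hVop_{\rm ctrl}-\id}\le\sum_m\norm{\hW_m-\id}$. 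The cap at $2$ is immediate, since $\hVop_{\rm ctrl}$ is unitary and so $\norm{\hVop_{\rm ctrl}-\id}\le2$.

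\textbf{Per-cell bound.} Apply Lemma~\ref{lem:dyson_main} on the cell interval, taking $T=4\ell_m$ after translation.
\begin{itemize}
\item Conjugation by a unitary preserves the norm, so $\Lambda\le4h_m\ell_m$, and therefore $\Lambda^2/2\le8h_m^2\ell_m^2$.
\item The first moment is $\hM_1=\ell_m\sum_{j=1}^4\hG_j\hH_{m,j}\hG_j$, where $\hG_j$ is the frame on segment $j$. Any phase carried by $\hg_{j-1}$ cancels under conjugation, and by hypothesis the $\hG_j$ run over $\pauli$ exactly once. Hence
\[
\hM_1=\ell_m\sum_j\hG_j(\hH_{m,j}-\bar{\hH}_m)\hG_j+\ell_m\sum_{\hG\in\pauli}\hG\bar{\hH}_m\hG .
\]
\item The second sum vanishes by Lemma~\ref{lem:pauli_symm}, because $\bar{\hH}_m$ is traceless. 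The first sum has norm at most $4\ell_m\delta_m$, using unitary invariance and Eq.~\eqref{eq:cell_assumptions}.
\end{itemize}
Therefore $\norm{\hW_m-\id}\le4\delta_m\ell_m+8h_m^2\ell_m^2$. Summing over $m$ gives Eq.~\eqref{eq:main_bound_nonuniform}.

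\textbf{Uniform case.} With common spacing there are $M=L/(4\ell)$ cells. Substituting $\ell_m=\ell$, $h_m\le h$ and $\delta_m\le\delta$ gives $\sum_m(4\delta\ell+8h^2\ell^2)=L\delta+2Lh^2\ell$. For the final inequality, drop the $\min$ and use $8h^2\ell^2=(4h\ell)^2/2\le e^{4h\ell}-1-4h\ell$, which follows from the exponential series.

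\textbf{Main obstacle.} The only delicate step is the bookkeeping of frames inside a cell. The cumulative frames $\hg_{j-1}$ need not start at $\id$ in each cell, and they carry matrix phases. One must check that only their Pauli classes enter $\hHtf$, and that the "visits each element once" hypothesis is exactly what Lemma~\ref{lem:pauli_symm} needs. The rest is routine.
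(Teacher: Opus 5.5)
Your proposal is correct and follows the paper's proof essentially step for step. The shared steps are: apply Lemma~\ref{lem:dyson_main} to each cell's continuous evolution; split the first moment into a Pauli-symmetrized reference term, which vanishes by Lemma~\ref{lem:pauli_symm}, and a deviation term bounded by $4\delta_m\ell_m$; use $\Lambda_m\le4h_m\ell_m$; telescope over the unitary cell propagators; and use $x^2/2\le e^x-1-x$ for the uniform form. Writing $\hG_j\hH_{m,j}\hG_j$ without a dagger is harmless only because you take $\hG_j$ to be the Hermitian Pauli representative after discarding the phase, which is the paper's remark that the frame phase drops out of conjugation.
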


\begin{proof}---Let $\hG_{m,j}$ be the actual constant cumulative frame on $I_{m,j}$. Its phase drops out of conjugation. The exact first moment in cell $m$ is
\begin{eqnarray}
\hM_{1,m} &=&\sum_{j=1}^4\hG_{m,j}^\dagger\left(\int_{I_{m,j}}\hH(z)dz\right)\hG_{m,j} \nonumber\\
	&=&\ell_m\sum_{j=1}^4\hG_{m,j}^\dagger\bar{\hH}_m\hG_{m,j} + \ell_m\sum_{j=1}^4\hG_{m,j}^\dagger(\hH_{m,j}-\bar{\hH}_m)\hG_{m,j}.
\end{eqnarray}
The first sum vanishes by Eq.~\eqref{eq:pauli_average}, and the second has norm at most $4\delta_m\ell_m$. Moreover,
\begin{eqnarray}
\Lambda_m=\int_{{\rm cell}\,m}\norm{\hHtf(z)}dz\le4h_m\ell_m.
\end{eqnarray}
Lemma~\ref{lem:dyson_main}, applied to the exact continuous evolution within the cell, yields $\norm{\hU_{{\rm cell},m}-\id}\le4\delta_m\ell_m+8h_m^2\ell_m^2$. The residual in Eq.~\eqref{eq:residual_propagator} is the ordered product of these unitary cell propagators. Telescoping therefore bounds its distance from the identity by the sum of the cell distances, proving Eq.~\eqref{eq:main_bound_nonuniform}. Setting $M=L/(4\ell)$ proves the uniform bound; the last comparison follows from $x^2/2\le e^x-1-x$. The segment averages have entered only the first moment, without approximating any segment propagator.
\end{proof}

\begin{corollary}[Smooth-fiber bound]
\label{cor:lipschitz}
Suppose that, in cell $m$, $\norm{\hH(z)-\hH(z')}\le\kappa_m|z-z'|$ and $\norm{\hH(z)}\le h_m$. Then
\begin{eqnarray}
\norm{\hVop_{\rm ctrl}-\id} \le \min\left\{2,\sum_{m=1}^M(4\kappa_m+8h_m^2)\ell_m^2\right\}.
\label{eq:lipschitz_nonuniform}
\end{eqnarray}
For common spacing and bounds $h_m\le h$, $\kappa_m\le\kappa$,
\begin{eqnarray}
\norm{\hVop_{\rm ctrl}-\id}\le\min\{2,C\ell\},
\quad
C=L(\kappa+2h^2).
\label{eq:lipschitz_bound}
\end{eqnarray}
\end{corollary}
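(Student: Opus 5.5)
The corollary follows directly from Theorem~\ref{thm:end_to_end}: the plan is to choose, in each cell, a reference $\bar{\hH}_m$ and a value of $\delta_m$ that the Lipschitz hypothesis makes small. The generator-strength assumption $\norm{\hH(z)}\le h_m$ is already the first condition in Eq.~\eqref{eq:cell_assumptions}, so only the bound on segment averages needs work.

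\textbf{Choice of reference.} We take $\bar{\hH}_m=\hH(c_m)$, where $c_m$ is the midpoint of cell $m$. This operator is traceless and Hermitian because $\hH$ is. The cell has length $4\ell_m$, so every $z$ in it satisfies $|z-c_m|\le 2\ell_m$. The segment average is an average of $\hH(z)$ over $I_{m,j}$, and the norm is convex, so
\begin{eqnarray}
\norm{\hH_{m,j}-\bar{\hH}_m}\le\frac{1}{\ell_m}\int_{I_{m,j}}\norm{\hH(z)-\hH(c_m)}dz\le 2\kappa_m\ell_m .
\end{eqnarray}

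\textbf{Sharpening the constant.} The bound above gives $\delta_m=2\kappa_m\ell_m$, and the term $4\delta_m\ell_m$ would then equal $8\kappa_m\ell_m^2$. The stated constant is $4\kappa_m$, which is a factor of two better, so this step is the main obstacle. The fix is to use the midpoint estimate directly. The distance from $c_m$ to the midpoint of segment $I_{m,j}$ is $3\ell_m/2$ for the outer segments and $\ell_m/2$ for the inner ones. By symmetry, the average of $\hH(z)-\hH(c_m)$ over $I_{m,j}$ is bounded by $\kappa_m$ times the mean of $|z-c_m|$ over that segment, which is at most $3\kappa_m\ell_m/2$. This still gives $6$ rather than $4$. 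So instead of going through a uniform $\delta_m$, I would go back to the first-moment estimate in the proof of Theorem~\ref{thm:end_to_end}. That proof only needs $\sum_j\norm{\ell_m(\hH_{m,j}-\bar{\hH}_m)}$, and this sum is at most $\kappa_m\int_{\rm cell}|z-c_m|\,dz=\kappa_m\cdot 2\cdot\frac{(2\ell_m)^2}{2}=4\kappa_m\ell_m^2$. That is exactly the required $4\kappa_m\ell_m^2$. I would therefore present the argument as a direct application of Lemma~\ref{lem:dyson_main} to each cell:
\begin{eqnarray}
\norm{\hM_{1,m}}\le 4\kappa_m\ell_m^2,\qquad \Lambda_m\le 4h_m\ell_m .
\end{eqnarray}
The Pauli-average term vanishes by Lemma~\ref{lem:pauli_symm}, and the lemma then gives $\norm{\hU_{{\rm cell},m}-\id}\le(4\kappa_m+8h_m^2)\ell_m^2$. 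Equivalently, this is Theorem~\ref{thm:end_to_end} with the $\delta$-term replaced by the summed first-moment bound.

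\textbf{Assembling the bound.} Telescoping over the cells, and capping at $2$ because $\hVop_{\rm ctrl}$ is unitary, yields Eq.~\eqref{eq:lipschitz_nonuniform}. For common spacing there are $M=L/(4\ell)$ cells, and
\begin{eqnarray}
M(4\kappa+8h^2)\ell^2=L(\kappa+2h^2)\ell ,
\end{eqnarray}
which gives Eq.~\eqref{eq:lipschitz_bound}.
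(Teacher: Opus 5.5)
Your final argument is correct and is essentially the paper's proof. You subtract the toggled constant $\hH(c_m)$, whose Pauli average vanishes, bound the cell first moment by $\kappa_m\int_{-2\ell_m}^{2\ell_m}|s|\,ds=4\kappa_m\ell_m^2$, add $\Lambda_m^2/2\le 8h_m^2\ell_m^2$ from Lemma~\ref{lem:dyson_main}, and telescope. Your side remark is also accurate: routing through a uniform $\delta_m$ in Theorem~\ref{thm:end_to_end} would only give the weaker constant $8\kappa_m$, or $6\kappa_m$ with segment-wise distances, which is exactly why the paper bounds the summed first moment directly.
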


\begin{proof}---Let $c_m$ be the center of cell $m$. The integral of the toggled constant operator $\hH(c_m)$ vanishes over the four equal segments. Hence, the first moment satisfies
\begin{eqnarray}
\norm{\hM_{1,m}} \le \int_{{\rm cell}\,m}\norm{\hH(z)-\hH(c_m)}dz \le \kappa_m\int_{-2\ell_m}^{2\ell_m}|s|ds = 4\kappa_m\ell_m^2.
\end{eqnarray}
Adding $\Lambda_m^2/2\le8h_m^2\ell_m^2$ and telescoping proves the result.
\end{proof}

For fixed $L,h,\kappa$, Eq.~\eqref{eq:lipschitz_bound} gives an explicit $O(L\ell)$ worst-case bound for the continuous generator. It refines the finite-spacing constants without assuming commuting segment generators. The stochastic calculation below concerns typical channel statistics; its interpretation does not require identifying an OU sample path with a Lipschitz function.

%-------------------------------------------------------------------------------------------------------------------------------------------------------------------------------------------------------------------------------------
\section{Effective channel and BB84 consequences}
%-------------------------------------------------------------------------------------------------------------------------------------------------------------------------------------------------------------------------------------

A single realization of the fiber and control sequence is unitary. A user who does not condition on the instantaneous realization sees a random-unitary channel
\begin{eqnarray}
\calE_{\rm ctrl}(\hrho)=\E_\omega\left[\hU_{{\rm ctrl},\omega}\hrho \hU_{{\rm ctrl},\omega}^\dagger\right].
\label{eq:random_unitary}
\end{eqnarray}
This channel is unital. Since the sequence is cyclic, $\hU_{{\rm ctrl},\omega}$ may equivalently be replaced by $\hVop_{{\rm ctrl},\omega}$ in this channel and in its entanglement fidelity. Fig.~\ref{fig:bb84_pauli_toggling_schematic} summarizes how the passive toggling layer enters the BB84 channel model used throughout this section.

\begin{figure}[t]
\centering
\includegraphics[width=0.90\linewidth]{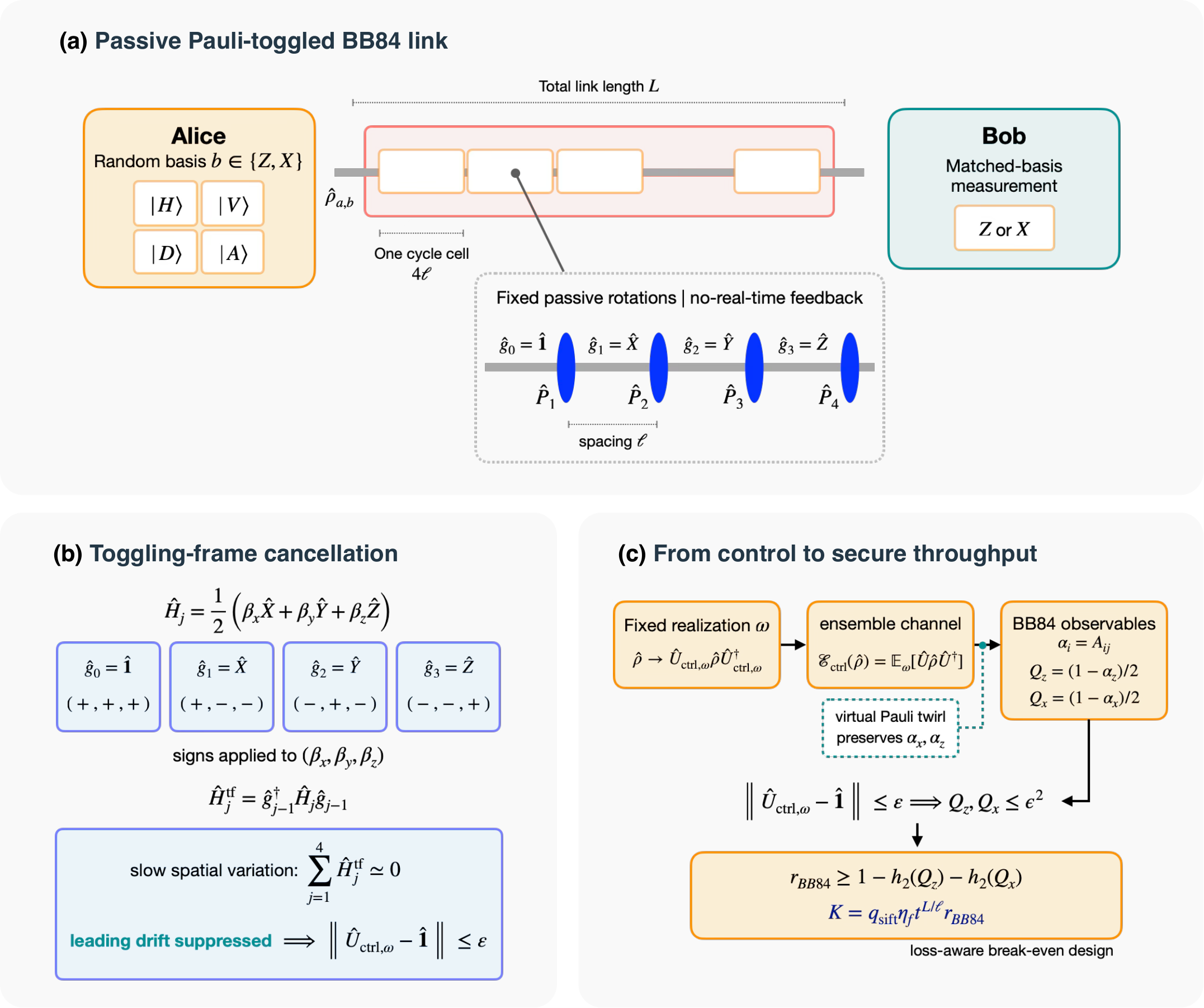}
\caption{Passive spatial Pauli toggling for a polarization-encoded BB84 link. (a) Alice sends a $\hZ$- or $\hX$-basis qubit through a fiber of length $L$ containing cyclic Pauli cells of length $4\ell$, where $\ell$ is the device spacing. The physical rotations $\hP_j$ generate the cumulative frames ${\id,\hX,\hY,\hZ}$ without real-time feedback, after which Bob performs matched-basis detection. (b) In the toggling frame, $\hH_j^{\rm tf}=\hg_{j-1}^{\dagger}\hH_j\hg_{j-1}$, the four frames reverse different components of the local birefringence generator. Sufficiently slow spatial drift therefore cancels to leading order, yielding $\norm{\hVop_{{\rm ctrl},\omega}-\id}\leq\eps$ after removal of the known final phase; the schematic denotes this representative by $\hU_{{\rm ctrl},\omega}$. (c) Averaging the fixed-realization unitary channels produces $\calE_{\rm ctrl}$. A virtual Pauli twirl (dashed) preserves the relevant diagonal Pauli-transfer coefficients and gives $Q_Z=(1-\alpha_z)/2$ and $Q_X=(1-\alpha_x)/2$. These errors determine the asymptotic secret fraction and the loss-aware rate $K=q_{\rm sift}\eta_f t^{L/\ell}r_{\rm BB84}$, exposing the tradeoff between drift suppression and accumulated insertion loss.}
\label{fig:bb84_pauli_toggling_schematic}
\end{figure}

Its action on Bloch vectors is described by the Pauli-transfer matrix
\begin{eqnarray}
A_{ij}=\frac{1}{2}\tr{\hsigma_i\calE_{\rm ctrl}(\hsigma_j)}, \quad i,j \in \{x,y,z\}.
\label{eq:pauli_transfer}
\end{eqnarray}
Equivalently, $A$ is the ensemble average of the $SO(3)$ rotations induced by the unitaries $\hU_{{\rm ctrl},\omega}$.

\begin{proposition}[Average-rotation form]
\label{prop:average_rotation}
Let $R_{\hU_\omega}$ be the Bloch-sphere rotation induced by $\hU_\omega$. For the random-unitary channel $\calE(\hrho)=\E_\omega[\hU_\omega\hrho \hU_\omega^\dagger]$, the Bloch matrix satisfies
\begin{eqnarray}
A=\E_\omega[R_{\hU_\omega}].
\end{eqnarray}
\end{proposition}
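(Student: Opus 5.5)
The plan is to prove the identity entrywise and use linearity of the trace and of the expectation. First, for a single fixed realization $\omega$, I define the induced rotation by $(R_{\hU_\omega})_{ij}=\frac{1}{2}\tr{\hsigma_i\hU_\omega\hsigma_j\hU_\omega^\dagger}$. Then I check that this is the standard adjoint-representation map $SU(2)\to SO(3)$ and that it agrees with the Bloch transport of Eq.~\eqref{eq:bloch_transport}. Concretely, $\hU_\omega\hsigma_j\hU_\omega^\dagger$ is Hermitian and traceless, so it can be written as $\sum_i (R_{\hU_\omega})_{ij}\hsigma_i$ with real coefficients. This is where the orthogonality relation $\frac{1}{2}\tr{\hsigma_i\hsigma_k}=\delta_{ik}$ is used. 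Consequently, a state $\hrho=(\id+\bm r\cdot\sig)/2$ is mapped to $(\id+(R_{\hU_\omega}\bm r)\cdot\sig)/2$. Orthogonality and unit determinant of $R_{\hU_\omega}$ follow from the preservation of the Hilbert--Schmidt inner product and from connectedness of $SU(2)$, but neither property is needed for the identity itself.

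Second, I insert the channel definition into Eq.~\eqref{eq:pauli_transfer}:
\begin{eqnarray}
A_{ij}=\frac{1}{2}\tr{\hsigma_i\,\E_\omega\bigl[\hU_\omega\hsigma_j\hU_\omega^\dagger\bigr]}
=\E_\omega\Bigl[\frac{1}{2}\tr{\hsigma_i\hU_\omega\hsigma_j\hU_\omega^\dagger}\Bigr]
=\E_\omega\bigl[(R_{\hU_\omega})_{ij}\bigr].
\end{eqnarray}
Interchanging the trace and the expectation is justified because the trace is a finite linear functional on $2\times2$ matrices. In addition, every integrand is bounded in modulus by $1$, so the expectation exists whenever $\omega\mapsto\hU_\omega$ is measurable, which is implicit in the definition of $\calE$.

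Finally, I note that unitality gives $\calE(\id)=\id$. Hence the full affine Bloch map has zero translation part, and $\calE$ acts on Bloch vectors exactly by $\bm r\mapsto A\bm r$, so $A$ really is the Bloch matrix.

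The argument has no serious obstacle. The only point needing care is a measure-theoretic one: the realization-wise rotations must be measurable and integrable. I would settle this by observing that the matrix entries are continuous functions of $\hU_\omega$ and are bounded. A second point is that a global phase ambiguity in $\hU_\omega$ does not matter, because $R$ depends only on the conjugation action. This is why the replacement by $\hVop_{{\rm ctrl},\omega}$ is harmless, consistent with Proposition~\ref{prop:u2_to_su2}.
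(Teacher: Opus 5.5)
Your proposal is correct and takes essentially the same route as the paper. Both arguments expand $\hU_\omega\hsigma_j\hU_\omega^\dagger=\sum_i(R_{\hU_\omega})_{ij}\hsigma_i$ and then use linearity of the expectation and the trace to read off $A_{ij}=\E_\omega[(R_{\hU_\omega})_{ij}]$; your extra remarks on measurability, boundedness of the entries, and unitality are valid supplements that the paper leaves implicit.
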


\begin{proof}---For a Pauli operator $\hsigma_j$, write $\hU_\omega\hsigma_j\hU_\omega^\dagger=\sum_i(R_{\hU_\omega})_{ij}\hsigma_i$. Linearity of $\calE$ gives $\calE(\hsigma_j)=\sum_i\E_\omega[(R_{\hU_\omega})_{ij}]\hsigma_i$. Taking $\frac12\tr{\hsigma_i(\cdot)}$ yields $A_{ij}=\E_\omega[(R_{\hU_\omega})_{ij}]$.
\end{proof}

For QKD it is convenient to replace $\calE_{\rm ctrl}$ by its Pauli twirl. The twirl preserves the diagonal entries
\begin{eqnarray}
\alpha_i=A_{ii}=\frac{1}{2}\tr{\hsigma_i\calE_{\rm ctrl}(\hsigma_i)},
\end{eqnarray}
and produces a Pauli channel
\begin{eqnarray}
\calP_{\rm ctrl}(\hrho)=p_I\hrho+p_X \hX\hrho \hX+p_Y \hY\hrho \hY+p_Z \hZ\hrho \hZ,
\label{eq:pauli_channel}
\end{eqnarray}
where
\begin{eqnarray}
p_I &=& \frac{1}{4}(1+\alpha_x+\alpha_y+\alpha_z), \nonumber\\
p_X &=& \frac{1}{4}(1+\alpha_x-\alpha_y-\alpha_z), \nonumber\\
p_Y &=& \frac{1}{4}(1-\alpha_x+\alpha_y-\alpha_z), \nonumber\\
p_Z &=& \frac{1}{4}(1-\alpha_x-\alpha_y+\alpha_z).
\label{eq:pauli_probs}
\end{eqnarray}
\begin{proposition}[BB84 QBER from diagonal Pauli-transfer coefficients]
\label{prop:qber_alpha}
For a trace-preserving and unital one-qubit channel, the BB84 error rates in the $\hZ$ and $\hX$ bases are
\begin{eqnarray}
Q_Z(\calE)=\frac{1-\alpha_z}{2},
\quad
Q_X(\calE)=\frac{1-\alpha_x}{2}.
\label{eq:qber_alpha}
\end{eqnarray}
Consequently Pauli twirling preserves $Q_Z$ and $Q_X$.
\end{proposition}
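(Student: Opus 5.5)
We compute the matched-basis error probabilities directly from the Bloch-vector action of the channel. In the $\hZ$ basis Alice prepares $\ket{H}$ or $\ket{V}$, with density operators $\hrho_\pm=(\id\pm\hZ)/2$, and Bob projects onto the corresponding eigenbasis. The error probability for input $\hrho_+$ is
\begin{eqnarray}
\tr{\tfrac{\id-\hZ}{2}\,\calE(\hrho_+)}.
\end{eqnarray}
Expanding by linearity, we need only two facts. Trace preservation gives $\tr{\calE(\id)}=2$, and unitality gives $\calE(\id)=\id$, so that $\tr{\hZ\calE(\id)}=0$. In addition, $\tr{\calE(\hZ)}=\tr{\hZ}=0$ by trace preservation, and $\tr{\hZ\calE(\hZ)}=2\alpha_z$ by the definition in Eq.~\eqref{eq:pauli_transfer}. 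Substituting these values, the error probability for $\hrho_+$ is $\frac14(2-2\alpha_z)=(1-\alpha_z)/2$. The same computation with $\hrho_-$ and the projector $(\id+\hZ)/2$ gives the identical value, because both sign flips occur together. Hence the error rate is independent of the bit value, and so of the prior on bits: $Q_Z=(1-\alpha_z)/2$. Repeating the argument with $\hX$ in place of $\hZ$, using the diagonal states $\ket{\pm}$, gives $Q_X=(1-\alpha_x)/2$.

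For the twirling statement, we first check that the Pauli twirl $\calP_{\rm ctrl}$ is itself trace preserving and unital, since it is a convex mixture of Pauli conjugations of $\calE$. We then show that its diagonal transfer coefficients equal those of $\calE$. Each Pauli conjugation $\hG(\cdot)\hG$ maps $\hsigma_i$ to $\pm\hsigma_i$. Therefore, in the twirl average, $\hG\calE(\hG\hsigma_i\hG)\hG$ contributes a factor $(\pm1)^2=1$ to the $i$th diagonal entry, and $\alpha_i$ is unchanged. Applying the formula already established to both channels then gives equal $Q_Z$ and $Q_X$. Alternatively, one can read this off directly from Eq.~\eqref{eq:pauli_probs}: $p_X+p_Y=(1-\alpha_z)/2$ and $p_Y+p_Z=(1-\alpha_x)/2$.

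The only point requiring care is the role of unitality. Without it, the cross term $\tr{\hZ\calE(\id)}$ would not vanish and the error rate would depend on the bit value. The hypothesis is used exactly there, and everything else is routine linear algebra.
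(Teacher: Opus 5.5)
Your proposal is correct and follows essentially the same route as the paper: you expand the wrong-outcome probability $\tr{\hrho_1\calE(\hrho_0)}$ by linearity, eliminate the cross terms using trace preservation, unitality and $\tr{\hZ}=0$, note that the other input gives the same value, and then use the fact that the twirl preserves $\alpha_x$ and $\alpha_z$. Your explicit check that the twirl is itself unital and trace preserving and leaves each $\alpha_i$ fixed, along with the cross-check $p_X+p_Y=(1-\alpha_z)/2$, $p_Y+p_Z=(1-\alpha_x)/2$, simply spells out what the paper asserts ``by construction.''
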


\begin{proof}---We prove the $\hZ$-basis expression. The two $\hZ$-basis states are $\hrho_0=(\id+\hZ)/2$ and $\hrho_1=(\id-\hZ)/2$. If Alice sends $\hrho_0$, Bob's wrong-outcome probability is
\begin{eqnarray}
\tr{\hrho_1\calE(\hrho_0)} = \frac14\tr{(\id-\hZ)(\id+\calE(\hZ))} = \frac12-\frac14\tr{\hZ\calE(\hZ)} = \frac{1-\alpha_z}{2},
\end{eqnarray}
where we used unitality, trace preservation, and $\tr{\hZ}=0$. The same value is obtained for input $\hrho_1$, and averaging over the two bits gives $Q_Z$. The $\hX$-basis result is identical. The twirl preserves $\alpha_x$ and $\alpha_z$ by construction, hence it preserves the two QBERs.
\end{proof}

For the Pauli channel in Eq.~\eqref{eq:pauli_channel}, Eq.~\eqref{eq:qber_alpha} reduces to
\begin{eqnarray}
Q_Z=p_X+p_Y,
\quad
Q_X=p_Z+p_Y.
\label{eq:qber_pauli}
\end{eqnarray}
The $\hY$ error contributes to both bit and phase errors, as expected.

\begin{theorem}[Pauli-channel secret fraction]
\label{thm:pauli_secret_fraction}
For an i.i.d. Bell-diagonal state obtained by applying the Pauli channel in Eq.~\eqref{eq:pauli_channel} to one half of $\ket{\Phi^+}$, knowledge of all four Pauli weights gives the one-way hashing yield
\begin{eqnarray}
r_{\rm Pauli}=\left[1-H_4(p_I,p_X,p_Y,p_Z)\right]_+,
\label{eq:pauli_hashing_rate}
\end{eqnarray}
where $H_4$ is the Shannon entropy of the four weights and $[x]_+=\max\{x,0\}$. For ideal single-photon BB84, using only the bit and phase error rates $e_b=Q_Z$ and $e_p=Q_X$ gives the achievable asymptotic secret-fraction bound
\begin{eqnarray}
r_{\rm BB84} \ge \left[1-h_2(e_b)-h_2(e_p)\right]_+.
\label{eq:bb84_bound}
\end{eqnarray}
Here $h_2(x)=-x\log_2x-(1-x)\log_2(1-x)$, with $h_2(0)=h_2(1)=0$. The BB84 bound assumes asymptotically efficient one-way error correction and the standard ideal source and measurement model.
\end{theorem}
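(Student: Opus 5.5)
The plan is to establish both claims through the standard entanglement-distillation / Shor--Preskill reduction, treating the Bell-diagonal state as the object on which one-way hashing acts.

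\emph{Hashing rate.} First I will write out the state explicitly. Applying $\calP_{\rm ctrl}$ to one half of $\ket{\Phi^+}$ gives
\[
\hrho_{AB}=p_I\Phi^+ + p_Z\Phi^- + p_X\Psi^+ + p_Y\Psi^-,
\]
which is diagonal in the Bell basis with eigenvalues $(p_I,p_X,p_Y,p_Z)$. I will then invoke the hashing protocol of Bennett--DiVincenzo--Smolin--Wootters. Its yield is $1-S(\hrho_{AB})$, and for a Bell-diagonal state $S(\hrho_{AB})=H_4(p_I,p_X,p_Y,p_Z)$. Alternatively, I will use the Devetak--Winter bound $I(A\!:\!B)-I(A\!:\!E)$, evaluated with a purification in which Eve holds the register recording the Pauli label. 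This also gives $1-H_4$, because the Bell-diagonal structure makes the coherent information equal to $1-H_4$. Taking the positive part simply reflects aborting or yielding zero, which gives Eq.~\eqref{eq:pauli_hashing_rate}.

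\emph{BB84 bound.} Next I will pass to the prepare-and-measure protocol by the usual entanglement-based equivalence: ideal single-photon BB84 is equivalent to Alice measuring her half of $\ket{\Phi^+}$ in a random basis. Within this picture I will identify the error rates with Pauli weights. The bit-error rate in the $Z$ basis is flipped by $\hX$ and $\hY$ errors, so $e_b=p_X+p_Y=Q_Z$. The phase-error rate, i.e.\ the error of a hypothetical $X$-basis measurement on the sifted $Z$ key, is flipped by $\hZ$ and $\hY$ errors, so $e_p=p_Z+p_Y=Q_X$. Both identifications follow from Eq.~\eqref{eq:qber_pauli}, and Proposition~\ref{prop:qber_alpha} guarantees that they coincide with the measured QBERs of the untwirled channel $\calE_{\rm ctrl}$.

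With these identifications in hand, the Shor--Preskill argument uses CSS-code distillation, which corrects bit and phase errors separately. This costs $h_2(e_b)$ for error correction and $h_2(e_p)$ for privacy amplification, giving Eq.~\eqref{eq:bb84_bound} asymptotically, with efficient one-way error correction assumed as stated.

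\emph{Main obstacle.} The delicate step is justifying that knowledge of only the marginals $e_b,e_p$, rather than all four weights, still certifies the rate. I will handle it by noting that $H_4\le H(e_b)+H(e_p)$, where $H(e_b)$ and $H(e_p)$ are the entropies of the two binary marginal variables (bit-flip and phase-flip indicators). This holds because $(p_I,p_X,p_Y,p_Z)$ is the joint distribution of those two indicators, so subadditivity of Shannon entropy applies. Hence
\[
1-h_2(e_b)-h_2(e_p)\le 1-H_4,
\]
which is consistent with the hashing yield, and the BB84 expression is the security-proof-sanctioned rate that does not require estimating $p_Y$ separately. I will also remark that twirling is only a virtual tool for the analysis: the security proof depends on the estimated $e_b$ and $e_p$, and these are preserved under twirling.
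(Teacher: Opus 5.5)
Your proposal is correct and follows essentially the same route as the paper. Both arguments map the four Pauli errors on Bob's half of $\ket{\Phi^+}$ to the Bell states and invoke BDSW hashing for $[1-H_4]_+$, treat $(p_I,p_X,p_Y,p_Z)$ as the joint law of the bit- and phase-flip indicators so that subadditivity gives $H_4\le h_2(e_b)+h_2(e_p)$, and rely on the CSS/Shor--Preskill argument for the two-marginal BB84 rate. Your explicit Bell-state assignment and the Devetak--Winter cross-check (which indeed also yields $1-H_4$ for a Bell-diagonal state with a $\hZ$-basis key) are correct additions that the paper does not spell out.
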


\begin{proof}---The four Pauli operators map Bob's half of $\ket{\Phi^+}=(\ket{00}+\ket{11})/\sqrt2$ to the four Bell states, up to global phase. Their weights determine the hashing yield in Eq.~\eqref{eq:pauli_hashing_rate}~\cite{Bennett1996EntanglementPurification}. Introduce binary variables $B$ and $F$ indicating bit and phase flips. Then $\Pr(B=1)=e_b$, $\Pr(F=1)=e_p$, and $H(B,F)=H_4(p_I,p_X,p_Y,p_Z)$. Thus, $H_4\le h_2(e_b)+h_2(e_p)$. A CSS-based BB84 security argument gives Eq.~\eqref{eq:bb84_bound} from these two marginal errors alone~\cite{Shor2000BB84}; a negative expression is replaced by zero by aborting key extraction.
\end{proof}

The full-weight hashing yield is not determined by the two BB84 error rates alone: the latter do not fix $p_Y$. All key-rate calculations below use the two-error expression in Eq.~\eqref{eq:bb84_bound}, rather than assuming access to the full Pauli distribution. The virtual twirl preserves these error rates but does not assert that the physical channel is Pauli diagonal or replace the usual BB84 security assumptions.

\begin{theorem}[Unitary residual to BB84 performance]
\label{thm:qkd_bound}
Assume that every realization of the cyclic controlled fiber satisfies
\begin{eqnarray}
\norm{\hVop_{{\rm ctrl},\omega}-\id}\le\eps, \quad \eps\ge0.
\end{eqnarray}
Then, the Pauli-twirled channel obeys
\begin{eqnarray}
p_X+p_Y+p_Z \le \min\{1,\eps^2\},
\quad Q_Z,Q_X \le \min\{1,\eps^2\}.
\label{eq:qber_epsilon_bound}
\end{eqnarray}
Define $q_\eps=\min\{\eps^2,1/2\}$. Under the BB84 assumptions of Theorem~\ref{thm:pauli_secret_fraction},
\begin{eqnarray}
r_{\rm BB84}\ge\left[1-h_2(Q_Z)-h_2(Q_X)\right]_+ \ge \left[1-2h_2(q_\eps)\right]_+.
\label{eq:key_fraction_epsilon}
\end{eqnarray}
\end{theorem}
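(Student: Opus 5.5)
The plan is to control the identity weight $p_I$ of the twirled channel through the entanglement fidelity, and then use $p_X+p_Y+p_Z=1-p_I$. By Eq.~\eqref{eq:pauli_probs}, $p_I=\frac14(1+\alpha_x+\alpha_y+\alpha_z)=\frac14(1+\operatorname{Tr}A)$. Proposition~\ref{prop:average_rotation} gives $A=\E_\omega[R_{\hVop_{{\rm ctrl},\omega}}]$; since the sequence is cyclic, $\hU_{{\rm ctrl},\omega}$ may be replaced by $\hVop_{{\rm ctrl},\omega}$. For a single unitary $\hVop$, the standard identity $1+\operatorname{Tr}R_{\hVop}=|\operatorname{Tr}\hVop|^2$ holds; one checks it by writing $\hVop=e^{i\phi}(\cos\frac\theta2\,\id-i\sin\frac\theta2\,\bm n\cdot\sig)$. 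Hence
\begin{eqnarray}
p_I=\E_\omega\left[\tfrac14\left|\operatorname{Tr}\hVop_{{\rm ctrl},\omega}\right|^2\right],
\end{eqnarray}
which is the averaged entanglement fidelity.

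Next I will bound $|\operatorname{Tr}\hVop|/2$ from below realization by realization. Let $\hVop$ be unitary with eigenvalues $e^{i\theta_1},e^{i\theta_2}$. The operator norm condition $\norm{\hVop-\id}\le\eps$ implies $|e^{i\theta_k}-1|\le\eps$, that is, $2-2\cos\theta_k\le\eps^2$, so $\cos\theta_k\ge1-\eps^2/2$. It follows that
\begin{eqnarray}
\tfrac12\left|\operatorname{Tr}\hVop\right|\ge\tfrac12\operatorname{Re}\operatorname{Tr}\hVop=\tfrac12(\cos\theta_1+\cos\theta_2)\ge1-\tfrac{\eps^2}{2}.
\end{eqnarray}
If $\eps^2\le2$, squaring gives $\frac14|\operatorname{Tr}\hVop|^2\ge(1-\eps^2/2)^2\ge1-\eps^2$. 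If $\eps^2>2$, the claimed bound $\min\{1,\eps^2\}=1$ is trivial. Averaging over $\omega$ then yields $p_X+p_Y+p_Z=1-p_I\le\min\{1,\eps^2\}$. Since all Pauli weights are nonnegative (they are averages of channel probabilities), Eq.~\eqref{eq:qber_pauli} gives $Q_Z=p_X+p_Y\le p_X+p_Y+p_Z$, and likewise for $Q_X$. By Proposition~\ref{prop:qber_alpha}, these are the physical QBERs, which proves Eq.~\eqref{eq:qber_epsilon_bound}.

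For the key fraction, the first inequality in Eq.~\eqref{eq:key_fraction_epsilon} is exactly Theorem~\ref{thm:pauli_secret_fraction}. For the second, I split into two cases.
\begin{itemize}
\item If $\eps^2\ge1/2$, then $q_\eps=1/2$ and $h_2(q_\eps)=1$. The right-hand side becomes $[1-2]_+=0$, which holds trivially.
\item If $\eps^2<1/2$, then $Q_Z,Q_X\le\eps^2=q_\eps\le1/2$. Because $h_2$ is monotone increasing on $[0,1/2]$, we get $h_2(Q_Z)+h_2(Q_X)\le2h_2(q_\eps)$. Monotonicity of $[\cdot]_+$ then finishes the argument.
\end{itemize}

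The main point needing care is the identity $p_I=\E|\operatorname{Tr}\hVop|^2/4$. It requires that the twirl preserves all three diagonal entries $\alpha_i$, and that the phase of $\hVop$ (fixed but not necessarily trivial) does not matter. Both hold because $|\operatorname{Tr}\hVop|$ and $R_{\hVop}$ are phase-invariant.
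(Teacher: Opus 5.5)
Your proposal is correct and follows the paper's proof: you identify $1-p_I$ with the ensemble-averaged entanglement infidelity $1-\E_\omega\bigl|\operatorname{Tr}\hVop_{{\rm ctrl},\omega}\bigr|^2/4$, bound it realization by realization by $\min\{1,\eps^2\}$, bound each QBER by $1-p_I$, and finish with the monotonicity of $h_2$ on $[0,1/2]$ using the same case split. The differences are cosmetic. You derive $p_I=\E_\omega F_e$ explicitly from $1+\operatorname{Tr}R_{\hVop}=|\operatorname{Tr}\hVop|^2$, and you get the per-realization bound $F_e\ge(1-\eps^2/2)^2$ from the real parts of the eigenvalues, rather than from the paper's $SU(2)$ parametrization $1-F_e=\sin^2\theta\le4\sin^2(\theta/2)$. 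Your route also does not need $\det\hVop=1$.
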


\begin{proof}---The residual $\hVop_{{\rm ctrl},\omega}$ is a product of determinant-one segment propagators, hence belongs to $SU(2)$. For such a unitary, write $\hVop=\cos\theta\,\id-i\sin\theta\,\bm n\cdot\sig$. Its entanglement fidelity relative to the identity is
\begin{eqnarray}
F_e(\hVop)=\abs{\tr{\hVop}}^2/4=\cos^2\theta,
\label{eq:F_e}
\end{eqnarray}
whereas $\bigl\|\hVop-\id\bigr\|=2\abs{\sin(\theta/2)}$~\cite{Schumacher1996EntanglementFidelity}. Therefore,
\begin{eqnarray}
1-F_e(\hVop)=\sin^2\theta\le\min\left\{1,\bigl\|\hVop-\id\bigr\|^2\right\}.
\end{eqnarray}
The entanglement fidelity is linear in the channel and is preserved by the Pauli twirl, so $1-p_I=\E_\omega[1-F_e(\hVop_{{\rm ctrl},\omega})]\le\min\{1,\eps^2\}$. Eq.~\eqref{eq:qber_pauli} bounds each QBER by $1-p_I$. If $\eps^2\le1/2$, monotonicity of $h_2$ on $[0,1/2]$ proves Eq.~\eqref{eq:key_fraction_epsilon}. If $\eps^2>1/2$, $h_2(Q)\le1=h_2(q_\eps)$ for every $Q\in[0,1]$, and the stated lower bound is zero. This also proves the assertion in the noninformative regime.
\end{proof}

The entropy bound gives a strictly positive certificate only when $\eps^2<q_{\rm th}$, where $q_{\rm th}\in(0,1/2)$ is the unique solution of $h_2(q_{\rm th})=1/2$. Requiring merely $\eps<1$ is insufficient, because binary entropy decreases beyond $1/2$.

\begin{corollary}[Ensemble and high-probability certificates]
\label{cor:ensemble}
If $B_\omega\ge\norm{\hVop_{{\rm ctrl},\omega}-\id}$ almost surely, then
\begin{eqnarray}
1-p_I,\ Q_Z,\ Q_X\le q_{\rm ens},
\quad
q_{\rm ens}=\E_\omega\left[\min\{1,B_\omega^2\}\right].
\label{eq:ensemble_bound}
\end{eqnarray}
Alternatively, if $\Pr(\norm{\hVop_{{\rm ctrl},\omega}-\id}\le\eps)\ge1-\zeta$, with $0\le\zeta\le1$, then
\begin{eqnarray}
1-p_I,\ Q_Z,\ Q_X\le q_{\rm hp},
\quad
q_{\rm hp}=(1-\zeta)\min\{1,\eps^2\}+\zeta.
\label{eq:high_probability_bound}
\end{eqnarray}
For either bound $q$, the corresponding secret-fraction certificate is $[1-2h_2(\min\{q,1/2\})]_+$.
\end{corollary}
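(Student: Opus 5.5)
The plan is to reuse the per-realization inequality established in the proof of Theorem~\ref{thm:qkd_bound}, but to average it over $\omega$ instead of taking a uniform bound. For each fixed realization, $\hVop_{{\rm ctrl},\omega}\in SU(2)$, and that proof gives
\begin{eqnarray}
1-F_e(\hVop_{{\rm ctrl},\omega})\le\min\left\{1,\norm{\hVop_{{\rm ctrl},\omega}-\id}^2\right\}.
\end{eqnarray}
Entanglement fidelity is linear in the channel and preserved by the Pauli twirl, so $1-p_I=\E_\omega[1-F_e(\hVop_{{\rm ctrl},\omega})]$. By Eq.~\eqref{eq:qber_pauli}, $Q_Z=p_X+p_Y\le 1-p_I$ and likewise $Q_X\le1-p_I$. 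Hence everything reduces to bounding the expectation $\E_\omega[\min\{1,\norm{\hVop_{{\rm ctrl},\omega}-\id}^2\}]$.

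For the ensemble certificate, I would use monotonicity of $x\mapsto\min\{1,x^2\}$ on $x\ge0$. The hypothesis $B_\omega\ge\norm{\hVop_{{\rm ctrl},\omega}-\id}$ almost surely gives a pointwise almost-sure inequality, and taking expectations yields $q_{\rm ens}$. The only care needed is measurability of $B_\omega$ and of the residual norm, which I would take as implicit in the statement (the expectation defining $\calE_{\rm ctrl}$ already presupposes it).

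For the high-probability certificate, I would split the expectation on the event $G=\{\norm{\hVop_{{\rm ctrl},\omega}-\id}\le\eps\}$. On $G$ the integrand is at most $\min\{1,\eps^2\}$, and on $G^c$ it is at most $1$. This gives
\begin{eqnarray}
1-p_I\le\Pr(G)\min\{1,\eps^2\}+(1-\Pr(G)).
\end{eqnarray}
The right-hand side is nonincreasing in $\Pr(G)$, since it is an affine function with slope $\min\{1,\eps^2\}-1\le0$. Therefore $\Pr(G)\ge1-\zeta$ lets us replace $\Pr(G)$ by $1-\zeta$, which gives $q_{\rm hp}$. This monotonicity observation is the one step that needs explicit mention; the rest is bookkeeping.

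Finally, the secret-fraction statement follows exactly as at the end of Theorem~\ref{thm:qkd_bound}. If $q\le1/2$, monotonicity of $h_2$ on $[0,1/2]$ together with Eq.~\eqref{eq:bb84_bound} gives $r_{\rm BB84}\ge[1-2h_2(q)]_+$. If $q>1/2$, then $h_2(\min\{q,1/2\})=1$, so the certificate is zero and holds trivially.
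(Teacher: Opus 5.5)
Your proposal is correct and follows the paper's proof essentially step for step. Like the paper, you average the pointwise bound $1-F_e\le\min\{1,\norm{\hVop_{{\rm ctrl},\omega}-\id}^2\}$, split on the good event with the trivial bound $1$ on its complement, and reuse the entropy step of Theorem~\ref{thm:qkd_bound}; your remark that the split bound is nonincreasing in $\Pr(G)$ simply makes explicit a step the paper leaves implicit.
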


\begin{proof}---Average the pointwise entanglement-fidelity inequality in the proof of Theorem~\ref{thm:qkd_bound} to obtain Eq.~\eqref{eq:ensemble_bound}. On the event of probability at least $1-\zeta$, use $1-F_e\le\min\{1,\eps^2\}$; on its complement use $1-F_e\le1$. This proves Eq.~\eqref{eq:high_probability_bound}. The entropy step is the same as in Theorem~\ref{thm:qkd_bound}.
\end{proof}

These statements allow realization-dependent generator bounds; no common finite bound over a Gaussian ensemble is assumed. The probability $\zeta$ refers to the modeled physical ensemble, not to a finite-key security parameter. The expectation or tail bound must be established for that ensemble before using the certificate quantitatively.

The total key rate per launched pulse also includes loss. For unbiased BB84, let $q_{\rm sift}=1/2$. Let $0<\eta_f\le1$ denote the baseline transmission-and-detection probability excluding the embedded devices, and let each device have power transmission $t$. For the canonical sequence, take $\hP_0=\id$ and count the $N(\ell)=L/\ell$ physical boundary operations, including the final closing rotation. With this convention, the multiplicative insertion factor is $\eta_{\rm ins}=t^{N(\ell)}$, and
\begin{eqnarray}
K(\ell,t)=q_{\rm sift}\eta_f t^{L/\ell}\left[1-h_2(Q_Z(\ell))-h_2(Q_X(\ell))\right]_+,
\label{eq:key_rate_loss}
\end{eqnarray}
This expression defines the asymptotic BB84 rate used for the comparisons; it is not an optimized channel capacity. The loss is assumed to be polarization independent and independent of the fiber realization, so conditioning on detection does not reweight the modeled channel ensemble. A controlled design beats the uncontrolled key rate $K_0$ exactly when the QBER-induced increase in secret fraction is larger than the insertion-loss penalty.

\begin{proposition}[Break-even condition]
\label{prop:break_even}
Let $Q_Z^0,Q_X^0$ be the uncontrolled BB84 error rates and suppose that the controlled spacing $\ell$ gives $Q_Z(\ell),Q_X(\ell)$. In the positive-secret-fraction regime, $K(\ell,t)>K_0$ if and only if
\begin{eqnarray}
t^{L/\ell} > \frac{1-h_2(Q_Z^0)-h_2(Q_X^0)}{1-h_2(Q_Z(\ell))-h_2(Q_X(\ell))}.
\label{eq:break_even}
\end{eqnarray}
In the symmetric case $Q_Z^0=Q_X^0=Q_0$ and $Q_Z(\ell)=Q_X(\ell)=Q(\ell)$, this reduces to
\begin{eqnarray}
t^{L/\ell} > \frac{1-2h_2(Q_0)}{1-2h_2(Q(\ell))}.
\end{eqnarray}
\end{proposition}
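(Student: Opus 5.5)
The plan is to compare the two key rates directly, using the definition in Eq.~\eqref{eq:key_rate_loss}. For the uncontrolled link there are no embedded devices, so its rate is
\begin{eqnarray}
K_0=q_{\rm sift}\eta_f\left[1-h_2(Q_Z^0)-h_2(Q_X^0)\right]_+ .
\end{eqnarray}
The controlled rate is
\begin{eqnarray}
K(\ell,t)=q_{\rm sift}\eta_f t^{L/\ell}\left[1-h_2(Q_Z(\ell))-h_2(Q_X(\ell))\right]_+ .
\end{eqnarray}
In both expressions the sifting factor $q_{\rm sift}$ and the baseline transmission $\eta_f$ are the same strictly positive constants. Dividing both sides of the inequality $K(\ell,t)>K_0$ by $q_{\rm sift}\eta_f>0$ therefore preserves its direction.

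Next I would make the phrase ``positive-secret-fraction regime'' precise. I take it to mean that the controlled secret fraction $s(\ell)=1-h_2(Q_Z(\ell))-h_2(Q_X(\ell))$ is strictly positive, so its positive part $[\cdot]_+$ can be dropped. I also take the uncontrolled fraction $s_0=1-h_2(Q_Z^0)-h_2(Q_X^0)$ to be positive, so $[s_0]_+=s_0$.

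With these conventions the inequality becomes $t^{L/\ell}s(\ell)>s_0$. Dividing by $s(\ell)>0$ gives Eq.~\eqref{eq:break_even}, and multiplying back shows the converse. Every step is multiplication or division by a strictly positive quantity, so the equivalence holds in both directions.

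The symmetric case then follows by direct substitution: setting $h_2(Q_Z^0)=h_2(Q_X^0)=h_2(Q_0)$ and $h_2(Q_Z(\ell))=h_2(Q_X(\ell))=h_2(Q(\ell))$ turns each sum of two entropies into $2h_2(\cdot)$.

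The only delicate point is the positivity bookkeeping. If $s(\ell)\le 0$, dividing by it is invalid; in that case $K(\ell,t)=0$ and no advantage is possible, which is why the statement is restricted to this regime. If $s_0\le0$ while $s(\ell)>0$, then $K_0=0$ and $K(\ell,t)>0$ whenever $t>0$. The displayed criterion still holds in that case, because its right-hand side is nonpositive while the left-hand side is positive. I would remark on this so the statement also covers the case where the uncontrolled secret fraction has already been clipped to zero.
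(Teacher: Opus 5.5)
Your argument is correct and matches the paper's proof. The paper divides $K(\ell,t)$ by $K_0=q_{\rm sift}\eta_f[1-h_2(Q_Z^0)-h_2(Q_X^0)]$, cancels $q_{\rm sift}\eta_f$, and reads off the condition that the ratio exceeds one; this implicitly uses the positivity of both secret fractions, which you make explicit. Your remark on the clipped case $s_0\le 0$ is a valid small extension provided $t>0$, since then both $K(\ell,t)>K_0=0$ and the displayed criterion hold.
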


\begin{proof}---Divide Eq.~\eqref{eq:key_rate_loss} by the uncontrolled expression $K_0=q_{\rm sift}\eta_f[1-h_2(Q_Z^0)-h_2(Q_X^0)]$. The common factor $q_{\rm sift}\eta_f$ cancels, and the condition that the ratio exceeds one is exactly Eq.~\eqref{eq:break_even}.
\end{proof}

Combining Theorems~\ref{thm:end_to_end} and~\ref{thm:qkd_bound} gives a conservative design rule. Let $\eps(\ell)$ be a uniform residual bound, for example that in Eq.~\eqref{eq:main_bound_uniform} or~\eqref{eq:lipschitz_bound}. Then,
\begin{eqnarray}
&& Q_Z(\ell),Q_X(\ell) \le \min\{1,\eps(\ell)^2\}, \nonumber\\
&& K(\ell,t) \ge q_{\rm sift}\eta_f t^{L/\ell} \left[1-2h_2\!\left(\min\{\eps(\ell)^2,1/2\}\right)\right]_+.
\label{eq:conservative_design_rule}
\end{eqnarray}
For a probabilistic generator model, $\eps(\ell)^2$ inside the entropy certificate may instead be replaced by a justified $q_{\rm ens}$ or $q_{\rm hp}$ from Corollary~\ref{cor:ensemble}. These lower bounds can be conservative; a zero certificate does not imply a zero physical key rate. Directly estimated $Q_Z,Q_X$ give the empirical comparison in Proposition~\ref{prop:break_even}.

\begin{proposition}[Certified gain and a loss exclusion condition]
\label{prop:certified_gain}
Let $r_0=[1-h_2(Q_Z^0)-h_2(Q_X^0)]_+>0$, and let $T$ be the total polarization-independent insertion transmission. If a residual or ensemble bound certifies $r_{\rm cert}$ for the controlled link, then
\begin{eqnarray}
T r_{\rm cert}>r_0 \quad\Longrightarrow\quad K_{\rm ctrl}>K_0.
\label{eq:certified_gain}
\end{eqnarray}
Conversely, $T\le r_0$ excludes a strict gain in the rate model~\eqref{eq:key_rate_loss}, regardless of how strongly the polarization errors are suppressed. For identical devices, a necessary condition is
\begin{eqnarray}
t^N > r_0,
\end{eqnarray}
and, when $r_{\rm cert}>r_0$, a sufficient condition is $t>(r_0/r_{\rm cert})^{1/N}$.
\end{proposition}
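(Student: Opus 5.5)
The argument only uses the rate model~\eqref{eq:key_rate_loss} with the product $t^{L/\ell}$ replaced by the total insertion transmission $T$:
\begin{eqnarray}
K_{\rm ctrl}=q_{\rm sift}\eta_f T\,r_{\rm ctrl},\qquad K_0=q_{\rm sift}\eta_f r_0 .
\end{eqnarray}
Here $r_{\rm ctrl}=[1-h_2(Q_Z)-h_2(Q_X)]_+$ is the secret fraction evaluated at the controlled link's actual error rates. By ``certifies $r_{\rm cert}$'' I take the meaning that $r_{\rm ctrl}\ge r_{\rm cert}$. This follows from Theorem~\ref{thm:qkd_bound} or Corollary~\ref{cor:ensemble}: the certified QBER bound $q$ dominates $Q_Z$ and $Q_X$, and $h_2$ is monotone on $[0,1/2]$ after the $\min\{q,1/2\}$ clipping. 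Since $q_{\rm sift}\eta_f>0$, both directions reduce to comparing $T r_{\rm ctrl}$ with $r_0$, in the same way as the proof of Proposition~\ref{prop:break_even}.

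\emph{Sufficiency.} The chain is $K_{\rm ctrl}/(q_{\rm sift}\eta_f)=T r_{\rm ctrl}\ge T r_{\rm cert}>r_0=K_0/(q_{\rm sift}\eta_f)$. Multiplying through by $q_{\rm sift}\eta_f$ gives $K_{\rm ctrl}>K_0$, which is Eq.~\eqref{eq:certified_gain}.

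\emph{Exclusion.} Every secret fraction in this model satisfies $r_{\rm ctrl}\le1$, because $h_2\ge0$ and the $[\cdot]_+$ is capped at $1$. Hence $T r_{\rm ctrl}\le T$. If $T\le r_0$, then $K_{\rm ctrl}\le q_{\rm sift}\eta_f T\le q_{\rm sift}\eta_f r_0=K_0$, so no strict gain is possible. This holds even with perfect suppression, $Q_Z=Q_X=0$. Taking the contrapositive, a strict gain requires $T>r_0$. For $N$ identical devices, $T=t^N$, which gives the stated necessary condition $t^N>r_0$.

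\emph{Identical devices, sufficient condition.} Assume $r_{\rm cert}>r_0$, so that $r_0/r_{\rm cert}\in(0,1)$ and its $N$-th root is well defined and lies in $(0,1)$. Then $t>(r_0/r_{\rm cert})^{1/N}$ with $t\ge0$ is equivalent to $t^N>r_0/r_{\rm cert}$, because $x\mapsto x^N$ is strictly increasing on $[0,\infty)$. This is exactly $T r_{\rm cert}>r_0$, and sufficiency gives the gain. If instead $r_{\rm cert}\le r_0$, then $T r_{\rm cert}\le r_0$ for every $T\le 1$, so no $t$ can meet the sufficient condition; this is why the hypothesis $r_{\rm cert}>r_0$ is needed.

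There is no real technical obstacle. The one point needing care is the interpretation of ``certifies''. The guarantee $r_{\rm ctrl}\ge r_{\rm cert}$ is only available in the sense of Corollary~\ref{cor:ensemble}, that is, for the ensemble QBERs $Q_Z$ and $Q_X$ that actually enter $K$. The security assumptions are those of Theorem~\ref{thm:pauli_secret_fraction}. The loss must be polarization independent, as stated before Eq.~\eqref{eq:key_rate_loss}, so that conditioning on detection does not change the QBERs.
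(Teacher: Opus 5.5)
Your proposal is correct and follows the paper's proof. For sufficiency you bound $K_{\rm ctrl}\ge q_{\rm sift}\eta_f T r_{\rm cert}$; for exclusion you use that the secret fraction is at most one, so $K_{\rm ctrl}\le q_{\rm sift}\eta_f T$; and you then set $T=t^N$, which is exactly the paper's three steps (your explicit handling of the monotonicity of $x\mapsto x^N$ and the hypothesis $r_{\rm cert}>r_0$ simply spells out what the paper leaves implicit).
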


\begin{proof}---The certified rate obeys $K_{\rm ctrl}\ge q_{\rm sift}\eta_f T r_{\rm cert}$ and $K_0=q_{\rm sift}\eta_f r_0$, proving the sufficient condition. Since the modeled secret fraction is at most one, $K_{\rm ctrl}\le q_{\rm sift}\eta_f T$; this proves the exclusion condition. The identical-device statements follow by setting $T=t^N$.
\end{proof}

\begin{corollary}[A sufficient spacing interval for smooth fibers]
\label{cor:spacing_window}
Assume the common smooth-fiber bounds of Corollary~\ref{cor:lipschitz} for every realization, with $C=L(\kappa+2h^2)>0$, ideal rotations, and $0<t<1$. Choose a target error $0<q_\star<q_{\rm th}$ such that $r_\star=1-2h_2(q_\star)>r_0>0$. Every physically admissible spacing $\ell=L/(4M)$, $M\in\mathbb{N}$, satisfying
\begin{eqnarray}
\frac{L(-\ln t)}{\ln(r_\star/r_0)}<\ell\le\frac{\sqrt{q_\star}}{C}
\label{eq:certified_spacing_window}
\end{eqnarray}
certifies $K(\ell,t)>K_0$.
\end{corollary}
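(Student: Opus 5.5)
The plan is to combine Corollary~\ref{cor:lipschitz}, Theorem~\ref{thm:qkd_bound}, and Proposition~\ref{prop:certified_gain}. Each inequality in Eq.~\eqref{eq:certified_spacing_window} handles one factor of the rate. The upper end controls the secret fraction. The lower end controls the insertion loss.

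\emph{Upper end.} Fix an admissible $\ell=L/(4M)$ with $\ell\le\sqrt{q_\star}/C$. The smooth-fiber bounds hold for every realization $\omega$ with common constants $h,\kappa$. Then Eq.~\eqref{eq:lipschitz_bound} gives, for every $\omega$,
\begin{eqnarray}
\norm{\hVop_{{\rm ctrl},\omega}-\id}\le\min\{2,C\ell\}\le\sqrt{q_\star}=:\eps .
\end{eqnarray}
The ideal-rotation assumption is what allows Corollary~\ref{cor:lipschitz} to be applied directly. Admissibility, $\ell=L/(4M)$, ensures the link is an integer number of complete cyclic four-frame cells, so the sequence is cyclic and $\hVop_{\rm ctrl}$ is the relevant residual. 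With $\eps^2=q_\star<q_{\rm th}<1/2$, we have $q_\eps=q_\star$. Theorem~\ref{thm:qkd_bound} then yields $Q_Z(\ell),Q_X(\ell)\le q_\star$. Since $h_2$ is monotone on $[0,1/2]$, this gives the certified secret fraction $r_{\rm cert}\ge 1-2h_2(q_\star)=r_\star>0$.

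\emph{Lower end.} The insertion transmission is $T=t^{N(\ell)}=t^{L/\ell}$. I will check that $\ell>L(-\ln t)/\ln(r_\star/r_0)$ is equivalent to $T r_\star>r_0$. Both $-\ln t>0$ and $\ln(r_\star/r_0)>0$, because $0<t<1$ and $r_\star>r_0$. Rearranging gives
\begin{eqnarray}
\frac{L}{\ell}(-\ln t)<\ln\frac{r_\star}{r_0},
\end{eqnarray}
and exponentiating gives $t^{-L/\ell}<r_\star/r_0$, that is, $t^{L/\ell}r_\star>r_0$. Then $T r_{\rm cert}\ge T r_\star>r_0$, and Proposition~\ref{prop:certified_gain}, Eq.~\eqref{eq:certified_gain}, yields $K_{\rm ctrl}=K(\ell,t)>K_0$.

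\emph{Main obstacle.} The only delicate step is bookkeeping, not analysis. I must confirm that the device count $N(\ell)=L/\ell$ matches the convention behind Eq.~\eqref{eq:key_rate_loss}. I must also confirm that the $\min\{\cdot,1/2\}$ truncation and the threshold $q_{\rm th}$ make $r_\star$ a valid certificate. Both are guaranteed by the hypothesis $q_\star<q_{\rm th}$.

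I will also remark that the corollary claims only sufficiency. The window may be empty, and nonexistence of admissible $\ell$ in it says nothing about whether a gain occurs.
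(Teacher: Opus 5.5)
Your proposal is correct and follows the paper's proof essentially step for step. The upper endpoint gives $\eps(\ell)^2\le C^2\ell^2\le q_\star$, hence $r_{\rm cert}\ge r_\star$ via Theorem~\ref{thm:qkd_bound}; the lower endpoint rearranges to $t^{L/\ell}r_\star>r_0$; and Proposition~\ref{prop:certified_gain} closes the argument. Your extra bookkeeping (cyclicity from $\ell=L/(4M)$, the device count $N(\ell)=L/\ell$, and $q_\eps=q_\star$ because $q_\star<q_{\rm th}<1/2$) makes explicit what the paper leaves implicit.
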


\begin{proof}---The upper inequality gives $\eps(\ell)^2\le C^2\ell^2\le q_\star$, hence $r_{\rm cert}\ge r_\star$. The lower inequality is equivalent to $t^{L/\ell}r_\star>r_0$. Proposition~\ref{prop:certified_gain} completes the proof.
\end{proof}

Eq.~\eqref{eq:certified_spacing_window} makes the competing requirements explicit: control sets an upper spacing, whereas accumulated insertion loss sets a lower spacing. It is a sufficient interval and must contain an admissible integer cell count; an empty interval does not exclude an empirical gain. This is an analytical certificate, not a fit to the numerical optimum.

%-------------------------------------------------------------------------------------------------------------------------------------------------------------------------------------------------------------------------------------
\section{Numerical demonstration with spatially correlated birefringence}
%-------------------------------------------------------------------------------------------------------------------------------------------------------------------------------------------------------------------------------------

We now test the scheme on a stochastic fiber model. The calculation does not use the analytical bound as an input; it directly multiplies the path-ordered $SU(2)$ propagators and then computes the channel statistics. The birefringence vector is modeled as a stationary spatial Ornstein--Uhlenbeck (OU) process,
\begin{eqnarray}
d\be(z)=-\frac{\be(z)-\be_0}{\xi}dz+\sqrt{\frac{2\sigma_\beta^2}{\xi}}\,d\bm W_z,
\label{eq:ou_model}
\end{eqnarray}
with independent components, correlation length $\xi$, rms fluctuation $\sigma_\beta$ for each component, and small mean bias $\be_0$. The illustrative model parameters used in the figures are
\begin{eqnarray}
L=50~{\rm km},
\quad
\xi=3~{\rm km},
\quad
\sigma_\beta=0.020~{\rm rad/km},
\quad
\be_0=(0.002,-0.001,0.0015)~{\rm rad/km}.
\end{eqnarray}
Each realization is discretized into 2400 fine spatial steps. For a chosen device spacing $\ell=L/N$, with $N$ a multiple of four, the controlled propagator is computed with the cyclic frame order $\id,\hX,\hY,\hZ$. We average over 200 fiber realizations. The QKD parameters are $q_{\rm sift}=1/2$, $\eta_f=0.1$, and representative per-device transmissions $t=0.997$ and $t=0.995$.

The OU paths are continuous and bounded on a finite link almost surely, so the realization-dependent continuous-cell bound is well defined. They are not Lipschitz almost surely, and their Gaussian ensemble has no finite common deterministic generator bound. Thus, these simulations do not directly test the smooth-fiber corollary. Corollary~\ref{cor:ensemble} describes how a separately justified ensemble or tail estimate would give a probabilistic certificate; no such estimate is substituted into the numerical rates reported here. The chosen $\xi$ and $\sigma_\beta$ specify an effective spatial-noise model, rather than parameters inferred from a measured deployed fiber.

The cumulative-product step in Algorithm~\ref{alg:MCE} makes the spacing scan efficient: the fine stochastic path is generated once per realization, and each coarse segment is recovered by a unitary quotient. This also ensures that different spacings are compared on the same ensemble of fiber realizations rather than on independently sampled fibers.

\begin{algorithm}[H]
\caption{Monte Carlo evaluation of a spatial Pauli-toggled fiber channel}
\label{alg:MCE}
\begin{algorithmic}[1]
\State Choose $L$, fine step $\Delta z$, OU parameters $(\be_0,\sigma_\beta,\xi)$, and a list of segment counts $N$.
\State For each realization $\omega=1,\ldots,N_{\rm MC}$, generate a spatial OU path $\be_\omega(z_k)$.
\State Form fine-step unitaries $\hU_k=\exp[-i\Delta z\,\be_\omega(z_k)\cdot\sig/2]$ and cumulative products $\hC_m=\hU_{m-1}\cdots \hU_0$.
\State Store the uncontrolled unitary $\hU_{\rm un}=\hC_{n_{\rm fine}}$.
\State For each allowed segment count $N$, compute coarse segment propagators as $\hU(z_j,z_{j-1})=\hC_{b_j}\hC_{a_j}^\dagger$, conjugate each segment by the cyclic frame $\hg_{j-1}\in(\id,\hX,\hY,\hZ)$, and multiply the toggling-frame product.
\State Convert each resulting unitary into $Q_Z$, $Q_X$, and $F_e=\abs{\tr{\hU}}^2/4$; average the channel statistics and compute $K(\ell,t)$.
\end{algorithmic}
\end{algorithm}

\begin{figure}[t]
\centering
\includegraphics[width=0.90\linewidth]{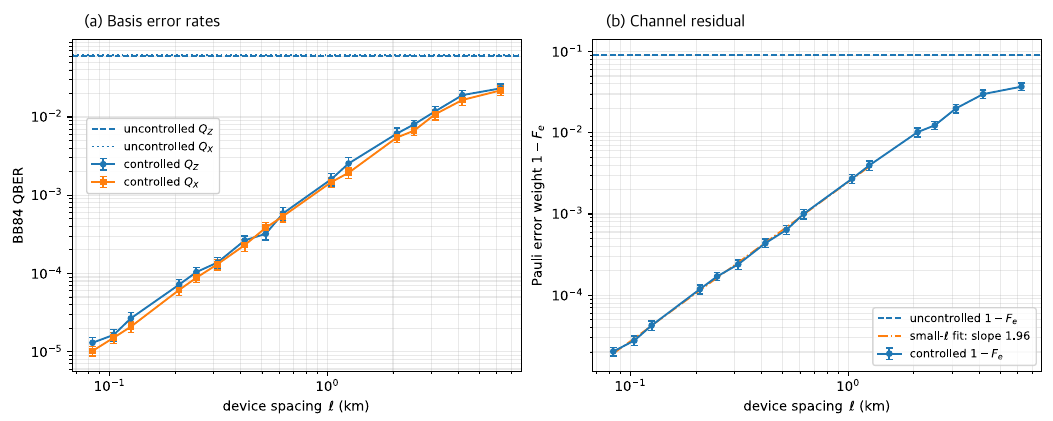}
\caption{Suppression of polarization-induced channel errors in the spatial OU fiber model. (a) Ensemble-averaged BB84 QBER in the $\hZ$ and $\hX$ bases as a function of device spacing $\ell$. Horizontal dashed/dotted lines show the uncontrolled channel. (b) Total Pauli error weight $1-F_e$ obtained from the entanglement fidelity. The small-$\ell$ fit is a visual diagnostic of typical scaling in this stochastic model; it is not imposed by the theorem. Error bars are 95\% confidence intervals over 200 realizations.}
\label{fig:suppression}
\end{figure}

Fig.~\ref{fig:suppression} shows the main physical effect. The uncontrolled fiber has average QBERs $Q_Z\simeq0.060$ and $Q_X\simeq0.062$. With Pauli toggling at $\ell=1.25$ km ($N=40$ segments), the same model gives $Q_Z\simeq2.54\times10^{-3}$ and $Q_X\simeq1.93\times10^{-3}$. The total Pauli error weight decreases from about $9.1\times10^{-2}$ to about $4.0\times10^{-3}$. For smaller spacings the residual continues to fall, but eventually the marginal benefit for key generation is outweighed by insertion loss.

\begin{table}[t]
\begin{adjustbox}{width=0.70\columnwidth}
\setlength{\tabcolsep}{15pt}
\begin{tabular}{lccccc}
\toprule
case & $N$ & $\ell$ (km) & $\bar Q$ & $r$ & $K$ \\
\midrule
uncontrolled & 0 & -- & $6.11\times10^{-2}$ & $0.336$ & $1.68\times10^{-2}$ \\
controlled & 16 & 3.125 & $1.13\times10^{-2}$ & $0.821$ & $3.91\times10^{-2}$ \\
controlled & 40 & 1.250 & $2.24\times10^{-3}$ & $0.954$ & $4.23\times10^{-2}$ \\
controlled & 120 & 0.417 & $2.47\times10^{-4}$ & $0.993$ & $3.46\times10^{-2}$ \\
controlled & 240 & 0.208 & $6.68\times10^{-5}$ & $0.998$ & $2.43\times10^{-2}$ \\
controlled & 480 & 0.104 & $1.57\times10^{-5}$ & $0.999$ & $1.18\times10^{-2}$ \\
\bottomrule
\end{tabular}
\end{adjustbox}
\caption{Representative simulation metrics. Here, $\bar Q=(Q_Z+Q_X)/2$, $r=1-h_2(Q_Z)-h_2(Q_X)$, and $K=\frac12\eta_f t^N r$ with $\eta_f=0.1$ and $t=0.997$.}
\label{tab:simulation_metrics}
\end{table}

Tab.~\ref{tab:simulation_metrics} and Fig.~\ref{fig:keyrate} translate the channel improvement into key rates. At $t=0.997$, the optimum in this scan occurs near $N=40$ devices, where the key rate is approximately $4.23\times10^{-2}$ bits per launched pulse, about 2.5 times the uncontrolled rate. For a more lossy device with $t=0.995$, the optimum shifts to a larger spacing, near $N=24$, because the loss penalty grows faster as the sequence is densified.

\begin{figure}[t]
\centering
\includegraphics[width=0.90\linewidth]{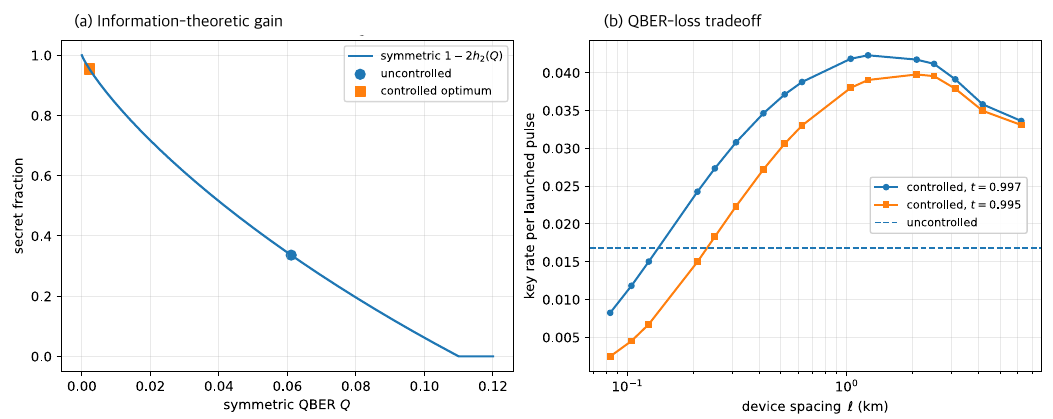}
\caption{BB84 information gain and insertion-loss tradeoff. (a) The symmetric-error secret fraction $1-2h_2(Q)$, with markers indicating the uncontrolled channel and the controlled optimum for $t=0.997$. (b) Key rate per launched pulse versus spacing for two per-device transmissions. The horizontal line is the uncontrolled key rate.}
\label{fig:keyrate}
\end{figure}

\begin{figure}[t]
\centering
\includegraphics[width=0.60\linewidth]{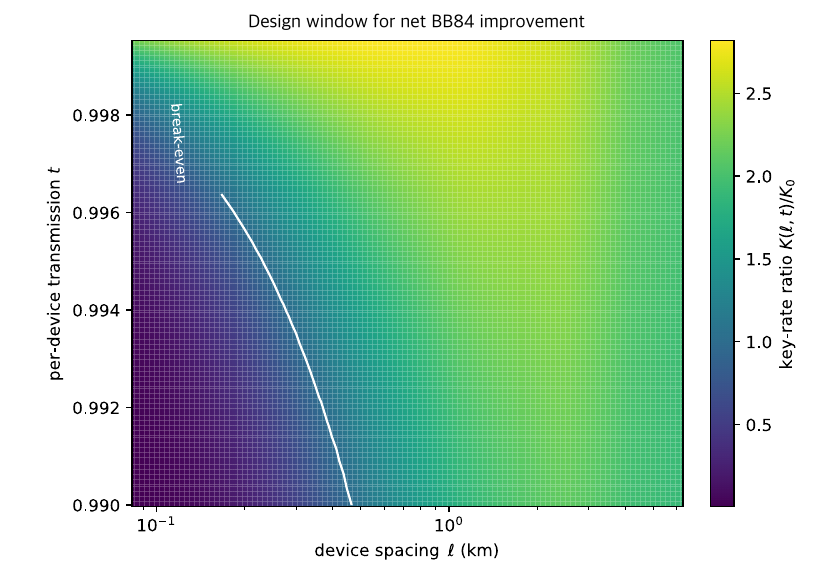}
\caption{Design map for net BB84 improvement. The color scale is $K(\ell,t)/K_0$, where $K_0$ is the uncontrolled key rate. The white contour is the break-even line. Above this contour, the QBER reduction produced by spatial Pauli toggling compensates for the insertion loss.}
\label{fig:design_map}
\end{figure}

Fig.~\ref{fig:design_map} summarizes the engineering message. There is a finite design window, not a monotone rule that ``more devices are always better.'' For high-transmission devices, decreasing $\ell$ initially improves the rate because it suppresses coherent polarization errors. Once the QBER is already small, further densification only multiplies loss and reduces the launched-pulse key rate. This behavior is consistent with Eq.~\eqref{eq:key_rate_loss} and provides a practical way to choose $\ell$ once device loss and measured polarization drift statistics are known.

As a second numerical check, Fig.~\ref{fig:xi_sensitivity} scans the spatial correlation length while keeping the total fiber length and birefringence rms fixed. The uncontrolled QBER grows with $\xi$ in this parameter range because larger correlated domains accumulate more coherent rotation before changing direction. The controlled curves behave differently: for the denser sequence ($N=120$), the mean QBER remains below $4\times10^{-3}$ across the scan; for $N=40$, it remains at the $10^{-3}$ scale for long correlation lengths and increases only when the correlation length becomes comparable to or shorter than the cell scale. This is consistent with the physical cancellation mechanism: the Pauli cell is most effective when the generator is approximately constant over a cell, and it becomes less efficient when spatial variation is rapid.

\begin{figure}[t]
\centering
\includegraphics[width=0.60\linewidth]{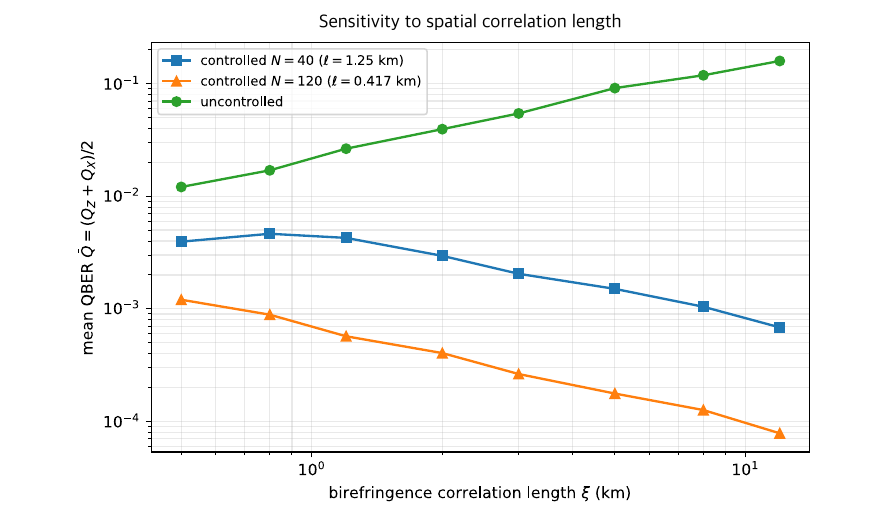}
\caption{Sensitivity to the birefringence correlation length $\xi$. The total length, rms birefringence, and mean bias are fixed; only $\xi$ is varied. The controlled curves use the same cyclic Pauli-cell sequence with $N=40$ and $N=120$ segments. The scan confirms that spatial Pauli toggling is strongest when the generator is correlated over the cell scale.}
\label{fig:xi_sensitivity}
\end{figure}

The correlation-length scan illustrates why the deterministic bound and the Monte Carlo calculation answer different questions. The bound is uniform over generators satisfying its stated local assumptions, whereas the simulation samples the specified OU ensemble. Moreover, the small-spacing fit in Fig.~\ref{fig:suppression} concerns $1-F_e$, not the operator-norm residual. A linear norm bound already implies a quadratic upper bound on the Pauli error through Theorem~\ref{thm:qkd_bound}; a fitted exponent near two therefore does not by itself establish a stronger order of suppression. The rate improvement is relative to an uncontrolled link. Neither performance against an active tracker nor its required update bandwidth is evaluated by this spatial ensemble.

%-------------------------------------------------------------------------------------------------------------------------------------------------------------------------------------------------------------------------------------
\section{Practical interpretation and extensions}
%-------------------------------------------------------------------------------------------------------------------------------------------------------------------------------------------------------------------------------------

The theoretical scheme assumes ideal Pauli rotations. In a polarization implementation, these can be realized by fixed wave-plate combinations or by equivalent fiber-based polarization controllers set to implement $\pi$ rotations on the Poincare sphere. The important requirement is not that the optical element be a matrix called $\hX$, $\hY$, or $\hZ$ in a laboratory basis, but that its conjugation action flip the corresponding Stokes components. A global phase and an overall sign are immaterial to the polarization channel; the residual convention in Eq.~\eqref{eq:residual_propagator} removes them before taking a norm. This makes the scheme compatible with a variety of passive optical layouts, provided the element axes are stable relative to the chosen polarization basis.

The frame description also gives a direct recipe for the physical pulse sequence. For example, the cell used in the simulations has segment frames $(\hg_0,\hg_1,\hg_2,\hg_3)\sim(\id,\hX,\hY,\hZ)$ and returns to $\hg_4\sim\id$, where $\sim$ denotes equality up to global phase. Since the actual boundary operation is $\hP_j=\hg_j\hg_{j-1}^\dagger$, the required pulse pattern is shown in Tab.~\ref{tab:cell_pulses}. Products are listed modulo global phase, so $\hY\hX$ is implemented as $\hZ$ and $\hZ\hY$ as $\hX$. Other permutations of the four frames are possible; they have the same first-moment cancellation but can have different higher-order commutator terms. This leaves room for engineering variants, such as choosing the frame order to reduce known dominant commutators or to match available optical hardware.

\begin{table}[t]
\begin{adjustbox}{width=0.70\columnwidth}
\setlength{\tabcolsep}{15pt}
\begin{tabular}{cccc}
\toprule
boundary/segment & desired frame & boundary pulse & sign pattern on $(\beta_x,\beta_y,\beta_z)$ \\
\midrule
segment 1 & $\hg_0=\id$ & $\hP_0=\id$ & $(+,+,+)$ \\
segment 2 & $\hg_1=\hX$ & $\hP_1=\hX$ & $(+,-,-)$ \\
segment 3 & $\hg_2=\hY$ & $\hP_2=\hY\hX\sim \hZ$ & $(-,+,-)$ \\
segment 4 & $\hg_3=\hZ$ & $\hP_3=\hZ\hY\sim \hX$ & $(-,-,+)$ \\
cell close & $\hg_4=\id$ & $\hP_4=\hZ$ & -- \\
\bottomrule
\end{tabular}
\end{adjustbox}
\caption{Canonical four-frame Pauli cell. The segment frame $\hg_{j-1}$ determines the sign pattern seen by the $j$-th fiber segment. The frame labels and boundary pulses are displayed modulo global phase; exact matrix products use the convention of Eq.~\eqref{eq:residual_propagator}.}
\label{tab:cell_pulses}
\end{table}

Device imperfections enter through both insertion loss and coherent calibration error. The former is included in Eq.~\eqref{eq:key_rate_loss}; the latter admits a deterministic bound without a first-order expansion.

\begin{proposition}[Accumulated rotation-error bound]
\label{prop:device_error}
Let the $N$ nonidentity boundary operations be implemented as $\widehat P_j^{\rm imp}=\hP_j e^{-i\hDelta_j}$, where, after discarding global phases, $\hDelta_j$ is traceless Hermitian and $\norm{\hDelta_j}\le\eta_j$. Let $\hU_{\rm ctrl}^{\rm imp}$ and $\hU_{\rm ctrl}^{\rm id}$ be the physical propagators on the same fiber realization. Define both residuals with the ideal cumulative frame, $\hVop_{\rm ctrl}^{\rm imp}=\hg_N^\dagger\hU_{\rm ctrl}^{\rm imp}$ and $\hVop_{\rm ctrl}^{\rm id}=\hg_N^\dagger\hU_{\rm ctrl}^{\rm id}$. If $\norm{\hVop_{\rm ctrl}^{\rm id}-\id}\le\eps_0$, then
\begin{eqnarray}
\norm{\hVop_{\rm ctrl}^{\rm imp}-\id} \le \min\left\{2,\eps_0+\sum_{j=1}^N\gamma_j\right\},
\quad
\gamma_j = 2\sin\left(\frac{\min\{\eta_j,\pi\}}{2}\right)\le\min\{\eta_j,2\}.
\label{eq:device_error_bound}
\end{eqnarray}
No independence or zero-mean assumption is imposed on the calibration errors.
\end{proposition}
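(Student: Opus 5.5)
The plan is a telescoping (hybrid) argument on the physical products, followed by a unitary-invariance step that moves the comparison to the residuals. Write the imperfect propagator as
\begin{eqnarray}
\hU_{\rm ctrl}^{\rm imp}=\hP_N e^{-i\hDelta_N}\hU_N\cdots \hP_1e^{-i\hDelta_1}\hU_1\hP_0 ,
\end{eqnarray}
with the convention that $\hDelta_j=0$ at identity boundaries. For $k=0,\ldots,N$, let $\hW_k$ be the hybrid product in which the first $k$ boundary operations, counted from the input side, are imperfect and the rest are ideal. Then $\hW_0=\hU_{\rm ctrl}^{\rm id}$ and $\hW_N=\hU_{\rm ctrl}^{\rm imp}$.

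Consecutive hybrids differ in only one factor. This gives
\begin{eqnarray}
\hW_k-\hW_{k-1}=\hat A_k\,\hP_k\bigl(e^{-i\hDelta_k}-\id\bigr)\hat B_k ,
\end{eqnarray}
where $\hat A_k$ and $\hat B_k$ are products of unitaries. Unitary invariance of the operator norm then yields $\norm{\hW_k-\hW_{k-1}}=\norm{e^{-i\hDelta_k}-\id}$.

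Next I evaluate this single-device distance. Since $\hDelta_k$ is traceless Hermitian on a qubit, its eigenvalues are $\pm\lambda$ with $\lambda=\norm{\hDelta_k}\le\eta_k$. Hence
\begin{eqnarray}
\norm{e^{-i\hDelta_k}-\id}=\max_\pm|e^{\mp i\lambda}-1|=2|\sin(\lambda/2)| .
\end{eqnarray}
If $\eta_k\le\pi$, then $\lambda\le\eta_k\le\pi$, and monotonicity of $\sin(x/2)$ on $[0,\pi]$ gives the bound $\gamma_k=2\sin(\eta_k/2)$. If $\eta_k>\pi$, the trivial bound $2=2\sin(\pi/2)$ applies. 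Either way the distance is at most $\gamma_k$. The comparison $\gamma_k\le\min\{\eta_k,2\}$ follows from $\sin x\le x$ and $\sin\le1$.

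Summing the telescoping differences gives
\begin{eqnarray}
\norm{\hU_{\rm ctrl}^{\rm imp}-\hU_{\rm ctrl}^{\rm id}}\le\sum_j\gamma_j .
\end{eqnarray}
Both residuals are defined with the same ideal frame, so $\hVop^{\rm imp}-\hVop^{\rm id}=\hg_N^\dagger(\hU^{\rm imp}-\hU^{\rm id})$, and the norm is unchanged by the unitary prefactor. The triangle inequality with $\norm{\hVop^{\rm id}-\id}\le\eps_0$ then gives $\eps_0+\sum_j\gamma_j$. The cap at $2$ holds because the distance between any two unitaries is at most $2$.

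I expect no real obstacle. The one delicate point is bookkeeping of the global-phase convention: the statement discards phases in $\hDelta_j$, and that phase must be absorbed consistently into the physical product. This is why both residuals are referenced to the ideal $\hg_N$; otherwise a phase mismatch could spuriously add to the norm. I would state explicitly that $e^{-i\hDelta_j}$ is taken to be the phase-fixed $SU(2)$ representative, so that the hybrid argument is exact. Independence or zero mean of the errors is never needed, since only the triangle inequality is used.
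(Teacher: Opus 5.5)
Your proposal is correct and takes essentially the same route as the paper. The paper's proof also replaces the imperfect boundary operations by ideal ones one at a time. It bounds each replacement by unitary invariance together with the spectral bound $\norm{e^{-i\hDelta_j}-\id}\le\gamma_j$, then multiplies by the common $\hg_N^\dagger$ and applies the triangle inequality; your explicit eigenvalue computation $2|\sin(\lambda/2)|$ and your remark on fixing the $SU(2)$ representative just fill in details the paper states briefly.
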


\begin{proof}---The spectral theorem gives $\norm{e^{-i\hDelta_j}-\id}\le\gamma_j$. Replace the imperfect operations by the ideal ones one at a time in the full physical product. All surrounding fiber propagators and rotations are unitary, so each replacement changes the product by at most $\gamma_j$ in norm. Thus, $\norm{\hU_{\rm ctrl}^{\rm imp}-\hU_{\rm ctrl}^{\rm id}}\le\sum_j\gamma_j$. Multiplication by the same $\hg_N^\dagger$, followed by the triangle inequality, proves Eq.~\eqref{eq:device_error_bound}. The residuals remain in $SU(2)$ under the stated traceless-error convention.
\end{proof}

When the stated residual and device-error bounds hold for every realization, and the scalar device transmissions $t_j$ are realization independent, set $T=\prod_jt_j$ and $\eps_{\rm imp}=\eps_0+\sum_j\gamma_j$. The certified imperfect-device rate satisfies
\begin{eqnarray}
K_{\rm imp}\ge q_{\rm sift}\eta_f T \left[1-2h_2\!\left(\min\{\eps_{\rm imp}^2,1/2\}\right)\right]_+.
\label{eq:imperfect_rate_certificate}
\end{eqnarray}
This bound covers systematic as well as stochastic bounded rotation errors; it need not be saturated by either. If the loss depends on the fiber realization, the channel average must instead be taken with the detection-conditioned ensemble before applying a QBER certificate.

\begin{corollary}[Spacing interval with bounded device errors]
\label{cor:imperfect_window}
Assume the smooth-fiber coefficient $C>0$ of Eq.~\eqref{eq:lipschitz_bound}, identical transmissions $0<t<1$, and $\gamma_j\le\gamma$. For the target $q_\star,r_\star,r_0$ of Corollary~\ref{cor:spacing_window}, suppose $q_\star\ge4C\gamma L$ and define
\begin{eqnarray}
\ell_\pm=\frac{\sqrt{q_\star}\pm\sqrt{q_\star-4C\gamma L}}{2C}.
\label{eq:rotation_window_endpoints}
\end{eqnarray}
Every admissible spacing $\ell=L/(4M)$ satisfying
\begin{eqnarray}
\ell_-\le\ell\le\ell_+,
\quad
\ell>\frac{L(-\ln t)}{\ln(r_\star/r_0)}
\label{eq:imperfect_spacing_window}
\end{eqnarray}
certifies $K_{\rm imp}>K_0$.
\end{corollary}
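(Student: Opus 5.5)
The argument chains Corollary~\ref{cor:lipschitz}, Proposition~\ref{prop:device_error}, Theorem~\ref{thm:qkd_bound} and Proposition~\ref{prop:certified_gain}, in the same way Corollary~\ref{cor:spacing_window} was proved. The only new ingredient is to control the accumulated rotation error as a function of $\ell$.

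\textbf{Step 1: combine the two residual bounds.} For an admissible spacing $\ell=L/(4M)$, Corollary~\ref{cor:lipschitz} gives the ideal residual bound $\eps_0\le C\ell$ for every realization. The number of boundary operations is $N=L/\ell$, so $\sum_j\gamma_j\le \gamma L/\ell$. Proposition~\ref{prop:device_error} then yields
\begin{eqnarray}
\eps_{\rm imp}(\ell)\le C\ell+\frac{\gamma L}{\ell}.
\end{eqnarray}
Counting all $N$ operations, even if some are nominally identity, only enlarges the bound, so it remains valid.

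\textbf{Step 2: reduce the error target to a quadratic inequality.} We want $\eps_{\rm imp}^2\le q_\star$, and it suffices to have $C\ell+\gamma L/\ell\le\sqrt{q_\star}$. Multiplying by $\ell>0$, this is equivalent to
\begin{eqnarray}
C\ell^2-\sqrt{q_\star}\,\ell+\gamma L\le0.
\end{eqnarray}
The discriminant is $q_\star-4C\gamma L\ge0$ by hypothesis. The roots are exactly $\ell_\pm$ of Eq.~\eqref{eq:rotation_window_endpoints}, and since $C>0$ the inequality holds precisely for $\ell_-\le\ell\le\ell_+$.

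If $\gamma=0$, then $\ell_-=0$ and we recover the ideal window. For $\gamma>0$, $\ell_->0$ automatically, so there is no degenerate case to handle.

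This algebra is the only genuinely new step. The one point to check is whether the bound of Step 1 is used before or after the $\min\{2,\cdot\}$ truncation. Since $q_\star<q_{\rm th}<1/2$, we have $\sqrt{q_\star}<2$ and the truncation is irrelevant.

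\textbf{Step 3: convert to a rate certificate.} With $\eps_{\rm imp}^2\le q_\star<1/2$, monotonicity of $h_2$ on $[0,1/2]$ together with Eq.~\eqref{eq:imperfect_rate_certificate} gives a certified secret fraction $r_{\rm cert}\ge r_\star=1-2h_2(q_\star)$.

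\textbf{Step 4: handle the loss factor.} The total insertion transmission is $T=t^{L/\ell}$. The lower spacing condition $\ell>L(-\ln t)/\ln(r_\star/r_0)$ is equivalent to $(L/\ell)(-\ln t)<\ln(r_\star/r_0)$. Here $\ln(r_\star/r_0)>0$ because $r_\star>r_0$, and $-\ln t>0$ because $0<t<1$. Exponentiating gives $t^{L/\ell}r_\star>r_0$, hence $Tr_{\rm cert}>r_0$. Proposition~\ref{prop:certified_gain} then gives $K_{\rm imp}>K_0$.

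I expect no real obstacle. The main care points are the sign and direction bookkeeping when exponentiating the loss condition, and confirming that the residual is defined with the ideal cumulative frame $\hg_N$, so that Proposition~\ref{prop:device_error} applies with the realization-uniform $\eps_0=C\ell$.
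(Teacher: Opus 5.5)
Your proposal is correct and follows the same route as the paper. You bound $\eps_{\rm imp}\le C\ell+\gamma L/\ell$ via Corollary~\ref{cor:lipschitz} and Proposition~\ref{prop:device_error}, turn $\eps_{\rm imp}\le\sqrt{q_\star}$ into the quadratic with roots $\ell_\pm$, and convert the lower spacing condition into $t^{L/\ell}r_\star>r_0$ before invoking Proposition~\ref{prop:certified_gain}; your added bookkeeping (the $N=L/\ell$ pulse count, the monotonicity check, the direction of the logarithmic inequality) only spells out steps the paper's short proof leaves implicit.
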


\begin{proof}---Taking $\eps_0=\min\{2,C\ell\}$, Corollary~\ref{cor:lipschitz} and Proposition~\ref{prop:device_error} imply $\eps_{\rm imp}\le C\ell+\gamma L/\ell$. The inequality $C\ell+\gamma L/\ell\le\sqrt{q_\star}$ is equivalent to $C\ell^2-\sqrt{q_\star}\ell+\gamma L\le0$, whose roots are Eq.~\eqref{eq:rotation_window_endpoints}. Thus, the first condition certifies $r_{\rm cert}\ge r_\star$, while the second gives $t^{L/\ell}r_\star>r_0$.
\end{proof}

For $\gamma>0$, the upper-bound expression $C\ell+\gamma L/\ell$ has minimum $2\sqrt{C\gamma L}$ at $\ell=\sqrt{\gamma L/C}$ before imposing the integer cell count and insertion loss. This is the minimizer of the residual certificate, not a prediction of the measured key-rate optimum. Eqs.~\eqref{eq:rotation_window_endpoints}--\eqref{eq:imperfect_spacing_window} show analytically how bounded device error can close a sufficient operating interval even when optical transmission is high.

The existing ideal-rotation values in Tab.~\ref{tab:simulation_metrics} also give a direct insertion-loss tolerance without an additional simulation. Using $N=40$, $r_0=0.336$, and $r_c=0.954$, Proposition~\ref{prop:break_even} gives
\begin{eqnarray}
t_{\rm min}=\left(\frac{0.336}{0.954}\right)^{1/40}\simeq0.97425,
\quad
a_{\rm max}=\frac{10}{40}\log_{10}\left(\frac{0.954}{0.336}\right)\simeq0.1133~{\rm dB},
\label{eq:loss_tolerance_existing}
\end{eqnarray}
where $a=-10\log_{10}t$ is the loss per installed device and strict improvement requires $a<a_{\rm max}$. The illustrative $t=0.997$ corresponds to about $0.0130$ dB per device. The tolerance in Eq.~\eqref{eq:loss_tolerance_existing} is conditional on retaining the tabulated ideal-rotation QBERs; a device-error guarantee instead uses Eq.~\eqref{eq:imperfect_rate_certificate}. An installed-device loss budget must include coupling, splices or connectors, and packaging as well as the rotator itself.

The present frequency-independent model excludes polarization-dependent loss and polarization-mode dispersion (PMD). A polarization-only unitary cannot restore an already lost photon or purify a mixed polarization state after frequency has been traced out. This does not forbid interleaved rotations from suppressing the formation of coherent polarization--frequency correlations during propagation. For example, $\hU(\nu)=\exp[-i\nu\tau\hZ/2]$ obeys $\hU(\nu)\hX\hU(\nu)\hX=\id$ at every angular-frequency detuning $\nu$, where $\tau$ is the differential group delay of each identical segment and the rotation is ideal and achromatic. Such PMD refocusing requires a frequency-dependent propagation and device model~\cite{MassarPopescu2007PMD}, which is outside the present bounds. Statements about uncorrected PMD and spectral correlations refer to this restricted model, rather than a general impossibility of coherent refocusing. Polarization-dependent attenuation additionally requires a loss-aware security model and cannot be represented by the scalar transmission factor alone.

An experimentally natural characterization procedure is as follows. First measure the uncontrolled Pauli-transfer diagonal entries by sending calibration states in the $\hX$, $\hY$, and $\hZ$ bases, or equivalently estimate the BB84 test-basis error rates over time. For a unital polarization channel, the calibration relation is particularly simple:
\begin{eqnarray}
\widehat\alpha_z=1-2\widehat Q_Z,
\quad
\widehat\alpha_x=1-2\widehat Q_X,
\end{eqnarray}
and an optional $\hY$-basis calibration gives $\widehat\alpha_y$. These three diagonal coefficients determine the Pauli-twirled error probabilities through Eq.~\eqref{eq:pauli_probs}. Then insert one Pauli-cell period and scan the cell spacing over a small number of values. The observed $Q_Z(\ell)$ and $Q_X(\ell)$ can be inserted directly into Eq.~\eqref{eq:key_rate_loss}, while the measured transmission gives $t^{L/\ell}$ without assuming a per-device loss model. This procedure does not require reconstructing the full stochastic process $\be(z)$. The OU model in this paper is therefore best regarded as a reproducible stress test and a guide for parameter ranges, not as a prerequisite for deployment.

The same data can be used to distinguish three regimes. In the first regime, the sequence lowers QBER and the measured insertion factor remains above the break-even threshold in Eq.~\eqref{eq:break_even}; the passive sequence improves the launched-pulse key rate. In the second regime, the QBER decreases but the insertion factor is too small; the sequence may still be useful in a system where the limiting resource is error correction overhead or stability rather than launched-pulse rate, but it does not increase the simple rate metric used here. In the third regime, the QBER fails to decrease substantially. This indicates either that the fiber generator varies too rapidly relative to the cell length, that the dominant impairment is nonunitary, or that the inserted rotations are not calibrated in the intended Pauli frame.

The framework also suggests two variants. The first is a sparse hybrid design in which passive Pauli cells could reduce the drift presented to an active polarization tracker; a larger update interval would require a separate temporal model or time-series validation. The second is an adaptive manufacturing design: the positions of passive rotators are fixed after installation but chosen from a one-time characterization of the link, allowing nonuniform spacings in sections with different stress or bend profiles. The proof of Theorem~\ref{thm:end_to_end} already permits cell-dependent parameters $h_m$ and $\delta_m$, so a nonuniform spacing analysis follows by replacing the uniform bound with the nonuniform sum in Eq.~\eqref{eq:main_bound_nonuniform}.

Finally, the same toggling principle is not limited to BB84. Any protocol whose performance is degraded by an unknown polarization rotation can benefit from a channel closer to the identity. For entanglement-based QKD, the same Pauli probabilities determine Bell-state weights. For measurement-device-independent or decoy-state variants, the single-photon polarization channel appears inside a larger optical model, but the coherent polarization part is still described by the same $SU(2)$ process. The main extra work in those settings is to combine the polarization improvement with source statistics, multi-photon terms, and finite-size estimation.

%-------------------------------------------------------------------------------------------------------------------------------------------------------------------------------------------------------------------------------------
\section{Design workflow for an experimental link}
%-------------------------------------------------------------------------------------------------------------------------------------------------------------------------------------------------------------------------------------

The previous sections give a mathematical bound, a channel-level conversion, and a Monte Carlo example. For an actual link, endpoint polarization measurements first establish the observed QBERs, attenuation, and compatibility with the reduced channel model. Repeated calibration can reconstruct the Pauli-transfer matrix, but an endpoint random-unitary description alone does not establish slowly varying spatial drift: every unital qubit channel admits a random-unitary representation. Spatial regularity must be supported independently by segment-resolved characterization or by the response to an installed control sequence. Basis-dependent attenuation, leakage, or nonunital behavior requires a more general optical and security model.

Once the relevant spatial correlations are characterized, the natural design variable is the cell length $4\ell$, not merely the number of devices. A good starting point is to choose $4\ell$ smaller than the measured spatial correlation length of the polarization transformation but not so small that the insertion factor $t^{L/\ell}$ falls below the break-even threshold. The correlation length can be estimated indirectly by measuring the Jones matrix of successive fiber sections during installation, by using optical time-domain characterization in a test link, or by treating $Q_Z(\ell)$ and $Q_X(\ell)$ from a few pilot spacings as the operational proxy. The last option is often sufficient for QKD: the key-rate expression depends on the observed error rates and total transmission, not on the underlying stochastic parameters themselves.

Tab.~\ref{tab:workflow} summarizes a practical workflow. The table is intentionally formulated in terms of directly measurable quantities, so that the theorem supplies a conservative interpretation while the key-rate calculation uses empirical values. In particular, the deterministic bound is most useful before deployment, when one wants to choose a safe range of spacings from bounds on the local generator and its variation, or from a justified probabilistic certificate. After deployment, the measured quantities $(Q_Z,Q_X,\eta_{\rm ins})$ should replace the bound wherever possible.

\begin{table}[t]
\begin{adjustbox}{width=0.80\columnwidth}
\setlength{\tabcolsep}{15pt}
\setlength{\extrarowheight}{1pt}
\renewcommand{\arraystretch}{1.18}
\begin{tabular}{llll}
\toprule
stage & input & calculation & decision \\
\midrule
baseline & $Q_Z^0,Q_X^0,\eta_f$ & $K_0=q_{\rm sift}\eta_f r_0$ & establish endpoint baseline \\
loss test & device/cell transmission & $\eta_{\rm ins}(\ell)$ or $t^{L/\ell}$ & reject high-loss spacings \\
pilot scan & $Q_Z(\ell),Q_X(\ell)$ & $K(\ell)=q_{\rm sift}\eta_f\eta_{\rm ins} r(\ell)$ & keep $K(\ell)>K_0$ \\
stability & QBER time series & compare drift with/without cells & select stable optimum \\
operation & periodic test data & update $\alpha_x,\alpha_y,\alpha_z$ & bypass cells if loss dominates \\
\bottomrule
\end{tabular}
\end{adjustbox}
\caption{Operational workflow for applying passive Pauli toggling to a polarization-encoded QKD link.}
\label{tab:workflow}
\end{table}

This workflow also clarifies the role of the inserted Pauli rotations. They are not meant to replace all polarization control. Instead, they reduce the coherent part of the channel before the receiver performs basis measurements. If an active tracker is already present, reducing its required dynamic range or update bandwidth is a prospective benefit that must be tested with temporal data. If the receiver is passive or power-limited, the sequence can act as a purely optical preconditioner. In either case the same figure of merit applies: the sequence is useful only when the increase in the secret fraction is larger than the decrease in detection probability.

A second design issue is whether the Pauli frames should be placed uniformly. Uniform spacing is analytically clean and easy to manufacture, but the nonuniform form of Theorem~\ref{thm:end_to_end} suggests a better strategy for heterogeneous links. Sections of fiber under tight bends, thermal gradients, or mechanical stress can have larger local $h_m$ or larger intra-cell variation $\delta_m$. Such sections are candidates for shorter cells, subject to the additional loss and device-error budgets. Quiet sections can use longer cells; any uncontrolled sections require their own residual bounds. Mathematically this replaces $L/(4\ell)$ identical cells by a sum of local contributions,
\begin{eqnarray}
\eps_{\rm nonunif}=\sum_m\left(4\delta_m\ell_m+8h_m^2\ell_m^2\right),
\end{eqnarray}
with a separate measured insertion factor for each installed device. The optimum is then a constrained allocation problem: reduce the largest coherent-error contributions first, but stop adding devices when the additional loss outweighs the secret-fraction improvement in the product $T[1-h_2(Q_Z)-h_2(Q_X)]_+$.

It is useful to monitor three dimensionless diagnostics during such a scan. The first is the polarization-error suppression factor
\begin{eqnarray}
S_Q(\ell)=\frac{Q_Z^0+Q_X^0}{Q_Z(\ell)+Q_X(\ell)},
\end{eqnarray}
which measures only channel quality and ignores loss. The second is the insertion penalty
\begin{eqnarray}
G_\eta(\ell)=\frac{\eta_{\rm ins}(\ell)}{\eta_{\rm ins}(\infty)},
\end{eqnarray}
where $\eta_{\rm ins}(\infty)=1$ for an uncontrolled link. The third is the operational gain
\begin{eqnarray}
G_K(\ell)=\frac{K(\ell)}{K_0}.
\end{eqnarray}
These quantities need not be ordered. A design may have a very large $S_Q$ and still have $G_K<1$ if the inserted elements are too lossy. Conversely, a modest reduction of QBER can be enough for $G_K>1$ when the uncontrolled error rate is close to the entropy threshold and the insertion loss is small. Reporting all three numbers prevents the improvement from being attributed solely to a visually impressive QBER curve.

The same distinction matters for finite data. The channel and key-rate statements here are asymptotic, whereas an implemented QKD system estimates its errors from a finite sample. Confidence bounds $Q_Z^{\rm U},Q_X^{\rm U}$ can be propagated through the entropy terms on their increasing branch, but this substitution alone is not a complete finite-key security analysis. A composable finite-key rate must also include the applicable phase-error estimation, error-correction leakage, secrecy and correctness terms, and assumptions about temporal correlations. Likewise, Monte Carlo confidence intervals quantify simulation sampling uncertainty, not cryptographic security. The same physical insertion factor can be used with a separately justified finite-key formula under the stated scalar-loss assumptions.

%-------------------------------------------------------------------------------------------------------------------------------------------------------------------------------------------------------------------------------------
\section{Discussion and conclusion}
%-------------------------------------------------------------------------------------------------------------------------------------------------------------------------------------------------------------------------------------

We have developed a spatial-control framework for suppressing coherent polarization drift in optical fibers using a fixed sequence of passive in-line Pauli rotations. With propagation distance playing the role of time in a toggling frame, a two-segment echo exactly refocuses a constant generator that anticommutes with the inserted Pauli, while a four-frame cell visiting $\{\id,\hX,\hY,\hZ\}$ cancels the leading contribution of any traceless quasi-static generator. For fixed link length and fixed bounds on the strength and Lipschitz variation of the generator, we established an end-to-end residual-error bound that scales linearly with the segment spacing as $\ell\rightarrow0$. We then connected this coherent-control statement to operational QKD quantities: averaging over uncontrolled fiber realizations produces a unital random-unitary channel, and Pauli twirling preserves the diagonal transfer coefficients that determine the BB84 error rates. In the representative 50 km stochastic-fiber simulation, toggling at $\ell=1.25$ km reduced $(Q_Z,Q_X)$ from approximately $(0.060,0.062)$ to $(2.54\times10^{-3},1.93\times10^{-3})$. With an assumed per-device transmission of $t=0.997$, the corresponding asymptotic key rate increased from $1.68\times10^{-2}$ to $4.23\times10^{-2}$ bits per launched pulse, or by approximately a factor of $2.5$.

The central lesson in our analysis is that the inserted devices repeatedly change the frame in which a slowly varying rotation accumulates, preventing its leading contribution from adding coherently. This explains both the appeal and the boundary of the method. The deterministic theorem provides a worst-case residual-error bound for the modeled unitary component when the generator is bounded and varies slowly within each cell, whereas the Monte Carlo results describe the representative behavior of the particular Ornstein--Uhlenbeck ensemble and parameter set considered here. The Pauli-twirled description likewise should not be interpreted as an assumption that the physical fiber channel is intrinsically Pauli diagonal: it is an operational reduction that preserves the BB84 QBERs relevant to the present analysis. Polarization-dependent loss, polarization-mode dispersion, and spectral correlations cannot be removed by this unitary symmetrization; source and detector imperfections and miscalibration of the embedded rotations require separate modeling. In our simulations, insertion loss eventually outweighs further QBER suppression; accumulated device error would introduce an additional spacing-dependent cost. The physically meaningful result is therefore the possibility of a finite design window, when the break-even condition can be met, determined jointly by the spatial correlation of the birefringence, the device spacing, and the measured transmission, rather than a universal prescription for the number of devices.

Within this interpretation, spatial Pauli toggling offers a practical way to shift part of polarization stabilization from continuous feedback to the structure of the transmission channel itself. The scheme could serve as a passive optical preconditioner for low-power or difficult-to-access links. In a hybrid architecture, it might also reduce the dynamic range and update rate required of an active polarization tracker. The nonuniform form of our bound further suggests placing shorter cells only in fiber sections with strong bending, thermal gradients, or mechanical stress, while leaving quieter sections sparsely controlled. Important next steps are to test the protocol with measured Jones-matrix data, incorporate lossy and misaligned devices directly into the channel model, and combine the observed QBER improvement with finite-key estimation and realistic source and detector statistics. Extensions to entanglement-based, decoy-state, and measurement-device-independent protocols can then determine how much of the channel-level gain survives in complete communication systems. More broadly, the present results illustrate a useful design principle: a quantum channel need not remain merely a source of errors to be corrected at its endpoints; when its coherent dynamics possess sufficient spatial structure, the channel itself can become part of the control architecture.

%==================================================================================================================================
\section*{Acknowledgement}
%==================================================================================================================================

This work was supported by the Ministry of Trade, Industry and Resources (MOTIR), Korea, under the project ``Industrial Technology Infrastructure Program'' (RS-2024-00466693). This work is also supported by the Grant No.~K25L5M2C2 at the Korea Institute of Science and Technology Information (KISTI). We acknowledge the Yonsei University Quantum Computing Project Group for providing support and access to the Quantum System One (Eagle Processor), which is operated at Yonsei University.

%====================================================================================================================================
\appendix
%====================================================================================================================================

%-------------------------------------------------------------------------------------------------------------------------------------------------------------------------------------------------------------------------------------
\section{Pauli algebra and Bloch representation}
%-------------------------------------------------------------------------------------------------------------------------------------------------------------------------------------------------------------------------------------

The Pauli matrices satisfy
\begin{eqnarray}
\hsigma_i\hsigma_j=\delta_{ij}\id+i\sum_k\eps_{ijk}\hsigma_k,
\end{eqnarray}
which implies the vector identity
\begin{eqnarray}
(\bm a\cdot\sig)(\bm b\cdot\sig)=(\bm a\cdot\bm b)\id+i(\bm a\times\bm b)\cdot\sig.
\label{eq:pauli_vector_identity_app}
\end{eqnarray}
Every qubit state can be written uniquely as
\begin{eqnarray}
\hrho=\frac{1}{2}(\id+\bm r\cdot\sig), \quad \norm{\bm r}\le1.
\end{eqnarray}
The eigenvalues are $(1\pm\norm{\bm r})/2$, so positivity is equivalent to $\norm{\bm r}\le1$. For $\hU\in SU(2)$, define
\begin{eqnarray}
(R_{\hU})_{ij}=\frac{1}{2}\tr{\hsigma_i\hU\hsigma_j\hU^\dagger}.
\end{eqnarray}
Then, $R_{\hU}\in SO(3)$ and
\begin{eqnarray}
\hU(\bm a\cdot\sig)\hU^\dagger=(R_{\hU}\bm a)\cdot\sig.
\end{eqnarray}
Eq.~\eqref{eq:bloch_transport} follows by substituting $\hrho=(\id+\bm r\cdot\sig)/2$ and $\hH=(\be\cdot\sig)/2$ into $d\hrho/dz=-i[\hH,\hrho]$, then using Eq.~\eqref{eq:pauli_vector_identity_app}.

For a unit vector $\bm m$, the Hermitian unitary $\hP_{\bm m}=\bm m\cdot\sig$ acts by
\begin{eqnarray}
\hP_{\bm m}(\bm n\cdot\sig)\hP_{\bm m}=(2(\bm m\cdot\bm n)\bm m-\bm n)\cdot\sig.
\end{eqnarray}
Thus, if $\bm m\cdot\bm n=0$, conjugation by $\hP_{\bm m}$ flips the sign of $\bm n\cdot\sig$. This is the algebraic basis of Theorem~\ref{thm:echo}.

%-------------------------------------------------------------------------------------------------------------------------------------------------------------------------------------------------------------------------------------
\section{Dyson and telescoping estimates}
%-------------------------------------------------------------------------------------------------------------------------------------------------------------------------------------------------------------------------------------

Let $\hHtf(z)$ be a bounded Hermitian generator on $[0,T]$ and let
\begin{eqnarray}
\hU(T,0) = \mathcal{P}\exp\left[-i\int_0^T\hHtf(z)dz\right].
\end{eqnarray}
The Dyson expansion gives
\begin{eqnarray}
\hU(T,0) = \id+\sum_{n=1}^{\infty}(-i)^n\int_{0\le z_n\le\cdots\le z_1\le T}\hHtf(z_1)\cdots\hHtf(z_n)dz_1\cdots dz_n.
\end{eqnarray}
Writing $\hM_1=\int_0^T\hHtf(z)dz$ and $\Lambda=\int_0^T\norm{\hHtf(z)}dz$, absolute summation of the Dyson terms gives the standard estimate $\norm{\hU(T,0)-\id}\le\norm{\hM_1}+e^\Lambda-1-\Lambda$. For a Hermitian generator, unitarity gives the sharper exact remainder estimate used in Theorem~\ref{thm:end_to_end}. Inserting the integral equation for $\hU(s,0)$ once yields
\begin{eqnarray}
\hU(T,0)-\id+i\hM_1 = -\int_0^Tds\int_0^sdr\,\hHtf(s)\hHtf(r)\hU(r,0).
\end{eqnarray}
Since $\norm{\hU(r,0)}=1$,
\begin{eqnarray}
\norm{\hU(T,0)-\id} \le \min\left\{2,\norm{\hM_1}+\frac{\Lambda^2}{2}\right\} \le \norm{\hM_1}+e^\Lambda-1-\Lambda.
\label{eq:dyson_bound_app}
\end{eqnarray}
This bound retains all orders of the exact evolution through its unitary remainder, rather than discarding higher Dyson terms.

The product estimate used there is also elementary. If $\hVop_1,\ldots,\hVop_M$ are unitary and $\hW_k=\hVop_k\cdots \hVop_1$, then
\begin{eqnarray}
\norm{\hW_k-\id}=\norm{\hVop_k(\hW_{k-1}-\id)+(\hVop_k-\id)} \le \norm{\hW_{k-1}-\id}+\norm{\hVop_k-\id}.
\end{eqnarray}
Iteration yields $\norm{\hVop_M\cdots \hVop_1-\id}\le\sum_m\norm{\hVop_m-\id}$.

%-------------------------------------------------------------------------------------------------------------------------------------------------------------------------------------------------------------------------------------
\section{Pauli twirling and BB84 formulas}
%-------------------------------------------------------------------------------------------------------------------------------------------------------------------------------------------------------------------------------------

For a unital qubit channel, $\calE(\hrho)=(\id+(A\bm r)\cdot\sig)/2$. If Alice sends a $\hZ$-basis bit, the average probability that Bob obtains the wrong $\hZ$ outcome is
\begin{eqnarray}
Q_Z(\calE)=\frac{1}{2}\left(1-\frac{1}{2}\tr{\hZ\calE(\hZ)}\right)=\frac{1-\alpha_z}{2}.
\end{eqnarray}
The $\hX$-basis expression is identical with $\hZ$ replaced by $\hX$. A Pauli channel acts diagonally on Pauli operators,
\begin{eqnarray}
\calP(\hX)=(p_I+p_X-p_Y-p_Z)\hX,
\quad
\calP(\hZ)=(p_I-p_X-p_Y+p_Z)\hZ,
\end{eqnarray}
so $Q_Z=p_X+p_Y$ and $Q_X=p_Z+p_Y$.

The one-way asymptotic BB84 secret fraction in the single-photon, i.i.d. Pauli-channel setting is bounded by
\begin{eqnarray}
r_{\rm BB84}\ge\left[1-h_2(e_b)-h_2(e_p)\right]_+,
\end{eqnarray}
where $e_b=Q_Z$ and $e_p=Q_X$. The same expression follows from the entanglement-based picture: Pauli noise on Bob's half of $\ket{\Phi^+}$ produces a Bell-diagonal state~\cite{Bennett1996EntanglementPurification} with weights $(p_I,p_X,p_Y,p_Z)$, and the full-weight hashing expression $[1-H_4(p_I,p_X,p_Y,p_Z)]_+$ is lower-bounded by the two-error expression through subadditivity of Shannon entropy. Access to the full Bell weights requires information beyond the two BB84 marginals. A certified error upper bound $q$ gives $r_{\rm BB84}\ge[1-2h_2(\min\{q,1/2\})]_+$; the truncation at $1/2$ is required by the monotonicity range of binary entropy.

%-------------------------------------------------------------------------------------------------------------------------------------------------------------------------------------------------------------------------------------
\section{Simulation algorithm}
%-------------------------------------------------------------------------------------------------------------------------------------------------------------------------------------------------------------------------------------

For each realization of Eq.~\eqref{eq:ou_model}, the code computes fine-step propagators
\begin{eqnarray}
\hU_k=\exp\left[-\frac{i\Delta z}{2}\be(z_k)\cdot\sig\right]
\end{eqnarray}
and cumulative products $\hC_m=\hU_{m-1}\cdots \hU_0$. A coarse segment propagator from fine index $a$ to $b$ is obtained as $\hC_b\hC_a^\dagger$, avoiding repeated multiplication for each spacing. For $N$ controlled segments, the code forms
\begin{eqnarray}
\hVop_{\rm ctrl}(N)=\prod_{j=N}^{1}\hg_{j-1}^\dagger \hU(z_j,z_{j-1})\hg_{j-1},
\quad
\hg_{j-1}\in\{\id,\hX,\hY,\hZ\}
\end{eqnarray}
with cyclic order $\id,\hX,\hY,\hZ$. This is the phase-fixed residual; its channel statistics equal those of the cyclic physical propagator. For each realization, it computes $R_{\hU}$, $Q_Z=(1-R_{zz})/2$, $Q_X=(1-R_{xx})/2$, and $F_e=|\tr{\hU}|^2/4$. The reported channel quantities are ensemble means; the plotted error bars are $1.96$ times the standard error of the mean.

\bibliographystyle{apsrev4-2}
\bibliography{botherq_refs}

\end{document}